\newtheorem{theorem}{Theorem}
\theoremstyle{plain}
\newtheorem{corollary}{Corollary}
\newtheorem{definition}{Definition}
\newtheorem{lemma}{Lemma}
\newtheorem{proposition}{Proposition}
\theoremstyle{definition}
\newtheorem{assump}{Assumption}
\newenvironment{myassump}[2][]
  {\begin{assump}[#1]}
  {\end{assump}}
\newtheorem*{definition*}{Definition}
\begin{document}
\title[Identification of hedonic equilibrium]{Identification of hedonic equilibrium and nonseparable simultaneous equations}
\author{Victor Chernozhukov, Alfred Galichon, Marc Henry and Brendan Pass}
\address{MIT, NYU and Sciences-Po, Penn State, University of Alberta}
\date{{\scriptsize {First version: 9 February 2014. The present version is of \today. The authors thank
Guillaume Carlier, Andrew Chesher, Pierre-Andr\'e Chiappori, Hidehiko Ichimura, Arthur Lewbel, Rosa Matzkin and multiple seminar audiences
for stimulating discussions. They also thank the editor James Heckman and four referees for insightful comments. Chernozhukov's research has received funding from the NSF. Galichon’s research has received funding from ERC grants CoG-866274 and 313699, from NSF grant DMS-1716489, and from FiME (Laboratoire de Finance des Marchés de l’Energie). Henry's research has received
funding from SSHRC Grants 435-2013-0292 and NSERC Grant
356491-2013. Pass' research has received support from a University of Alberta start-up grant and NSERC grants 412779-2012 and 04658-2018.}}}

\begin{abstract}
This paper derives conditions under which preferences and technology are
nonparametrically identified in hedonic equilibrium models, where products
are differentiated along more than one dimension and agents are
characterized by several dimensions of unobserved heterogeneity. With
products differentiated along a quality index and agents characterized by
scalar unobserved heterogeneity, single crossing conditions on preferences
and technology provide identifying restrictions in \citet{EHN:2004} and \citet{HMN:2010}. We develop similar shape
restrictions in the multi-attribute case. These shape restrictions, which are
based on optimal transport theory and generalized convexity, allow us to identify
preferences for goods differentiated along multiple dimensions, from the observation of a single market.
We thereby derive identification results for nonseparable simultaneous equations and multi-attribute hedonic equilibrium models with (possibly) multiple dimensions of unobserved heterogeneity.
One of our results is a proof of absolute continuity of the distribution of endogenously traded qualities,
which is of independent interest.
\end{abstract}

\maketitle
\thispagestyle{empty}

\noindent

{\footnotesize
\textbf{Keywords}: Hedonic equilibrium, multivariate quantile
identification, multidimensional unobserved heterogeneity, cyclical monotonicity, optimal
transport.

\textbf{JEL subject classification}: C14, C39, C61, C78
}


\section*{\protect Introduction}

Hedonic models were
initially introduced by \citet{Court:39} to price highly
differentiated goods in terms of their attributes. The vast subsequent literature on hedonic regressions
of prices on attributes aimed at measuring the marginal willingness of consumers to pay for the attributes of the good they acquired, or the marginal willingness of workers to accept compensation for the attributes of their occupations. When unobservable taste for attributes drives the consumers' choices, however, a simple regression of price on attributes cannot inform us on the willingness to pay for different quality levels from the ones characterizing the good actually acquired. Nor can they inform us on the willingness to pay for characteristics of the good they would acquire under counterfactual market conditions, with different endowments, preferences and technology.

The willingness to pay for counterfactual transactions, together with structural parameters of preferences and technology, can be recovered with a general equilibrium theory of hedonic models, dating back to \citet{Tinbergen:1956} and \citet{Rosen:1974} (see \citet{Heckman:2019} for an account of their respective contributions).
The common underlying framework, which we also adopt here, is that of a
perfectly competitive market with heterogeneous buyers and sellers and
traded product quality bundles and prices that arise endogenously in
equilibrium\footnote{When preferences are quasi-linear in price and under mild
semicontinuity assumptions, \citet{Ekeland:2010} and %
\citet{CMN:2010} show that equilibria exist, in the form of a joint
distribution of product and consumer types (who consumes what), a joint
distribution of product and producer types (who produces what) and a price
schedule such that markets clear for each traded product.}.
\citet{Rosen:1974} proposes a
two-step procedure to estimate general hedonic models and thereby analyze
general equilibrium effects of changes in buyer-seller compositions,
preferences and technology on qualities traded at equilibrium and their
price (see \citet{Heckman:99}). The first step is a regression of prices on attributes, and the second is a simultaneous equations estimation of the demand and supply system with marginal prices estimated in the first step as endogenous variables.

\citet{Brown:83} and \citet{BR:1982} point out that changes in consumers' unobserved taste for attributes would lead them to source goods from different suppliers, so that exclusion restrictions from the supply side cannot be justified\footnote{See also \citet{Epple:1987}, \citet{Bartik:1987} and \citet{KL:1988}.}. The literature on recovering marginal willingness to pay for counterfactual transactions has since followed three strategies: relying on multiple markets across space or time (see \citet{Brown:83}, \citet{BR:1982}, \citet{KL:1988}, \citet{TW:2001} and many references in \citet{KST:2013}); relying on specific functional forms for utility (\citet{BB:2005}, \citeauthor{BT:2011} \citeyear{BT:2011,BT:2019} and references therein); or assume consumers care about a single dimension of good heterogeneity, via a quality index (\citeauthor{EHN:2004} \citeyear{EHN:2002,EHN:2002a,EHN:2004}, \citeauthor{HMN:2010} \citeyear{HMN:2003,HMN:2010}, \citet{EPS:2010} and \citet{EQS:2020}). Each of these strategies has drawbacks. Multimarket strategies rely on the assumption of no leakage between markets, or consumption substitution across time and space. They also rely on the assumption that preferences and the distribution of preference types are stable across markets, so that variation comes from the supply side and preferences are identified, but not technology (or vice versa with symmetric assumptions, see \citet{EHN:2004}). Identification strategies based on specific parameterizations of preferences and technology cannot distinguish features of the specification that are crucial to identification and features that are convenient approximations. Identification proofs must also be repeated for each new parameterization, the suitability of which depends on the application (see the discussion in \citet{Yinger:2014}). Finally, it is important in many applications to account for heterogeneity in consumers' (or workers') relative valuations of different attributes, and hence move beyond the case of a scalar index of attributes.

This paper proposes an identification strategy based on a single market, where agents have heterogeneous relative valuations for different attributes, without relying on a specific parametric specification of preferences and technology. A leading case in the class of specifications we entertain is~$U(x,\varepsilon,z)=\bar U(x,z)+z'\varepsilon,$ where~$U$ is the valuation of the bundle of attributes~$z$ as a function of the vectors of observable and unobservable consumer characteristics~$x$ and~$\varepsilon$ respectively. We show that for each choice of distribution of types~$\varepsilon$, the function~$\bar U$ is recovered nonparametrically (and the same result holds for the supply side).
Our contribution is a direct generalization of the main identification strategy in \citet{HMN:2010}. In the latter, under a single crossing condition\footnote{Also known as Spence-Mirlees or supermodularity condition.} on the utility
function, the first order condition of the consumer problem yields an increasing
demand function, i.e., quality demanded by the consumer as an increasing
function of her unobserved type, interpreted as unobserved taste for
quality. Assortative matching guarantees uniqueness of demand, as the unique
increasing function that maps the distribution of unobserved taste for
quality, which is specified a priori, and the distribution of qualities,
which is observed. Hence demand is identified as a quantile function, as in %
\citet{Matzkin:2003}. Identification, therefore, is driven by a shape
restriction on the utility function.

 The main achievement of this paper is to show that a suitable multivariate extension (called {\em twist condition})
 of the single crossing shape restriction delivers the same identification result
 in hedonic equilibrium with multiple good quality dimensions. Heuristically, the proof mirrors \citet{HMN:2010}
 in that is first involves showing identification of inverse demand, which then allows the identification of marginal utility
 from the first order condition of the consumer's program. The identification of inverse demand, i.e., a single valued mapping from
 a vector of good qualities to a vector of unobserved consumer type, involves the twist shape restriction and cyclical monotonicity
 of the hedonic equilibrium solution. The recovery of marginal utility from the first order condition of the consumer's program
 is involved, because differentiability of the hedonic equilibrium price function is not guaranteed. Known conditions for
 differentiability of transport potentials in general optimal transport problems, due to \citet{MTW:2005} and which would yield differentiability of the hedonic equilibrium price in our context, are very strong and rule out many simple forms of the matching surplus (see Chapter~12 of \citet{Villani:2009}). We are able to by-pass the \citet{MTW:2005} conditions using the special structure of the hedonic equilibrium, and
to show approximate differentiability (Definition~10.2 page 218 of \citet{Villani:2009}) of the price function,
 for which we need absolute continuity of the distribution of good qualities traded at equilibrium. To that end, we provide a set of mild conditions on the primitives under which the endogenous distribution of qualities traded at equilibrium is absolutely continuous.
The proof of absolute continuity of the distribution of qualities traded at equilibrium is based on an argument from \citet{FJ:2008}, also applied in \citet{KP:2014}\footnote{\citet{FJ:2008} and \citet{KP:2014} focus on quadratic distance cost, i.e.,~$\zeta(x,\varepsilon,z)=-d^2(z,\varepsilon)$ in the notation of Assumption~\ref{ass:uh}, and work in more exotic geometric spaces.}.

 An important special case of our main identification theorem is the case, where the consumer's utility depends on consumer unobserved heterogeneity~$\varepsilon$ only
 through the index~$z'\varepsilon$, where~$z$ is the vector of good qualities. This case has the appealing interpretation that each dimension of unobservable taste is associated with a
 good quality dimension and the appealing feature that marginal utility is characterized as the solution of a convex program. However, choosing the dimension of unobserved consumer heterogeneity to be equal to the dimension of the vector of good qualities is a somewhat arbitrary modelling choice, and we provide an extension of our main identification theorem to cases where the dimension of unobserved consumer heterogeneity is lower, including a model with scalar unobserved heterogeneity. We derive a local identification result under mild conditions, but for global identification, we need a shape restriction on the endogenous price function, for which we know of no sufficient conditions on primitives. Another restrictive aspect of our main result is the necessary normalization of the distribution of unobserved heterogeneity, when identifying primitives from a single market. We provide some relaxation of this constraint when data from multiple  markets is available, but the results are still fragmentary.

The analysis of identification of inverse demand in hedonic equilibrium reveals that inverse demand satisfies a multivariate notion of monotonicity, whose definition depends on the form of the utility function that is maximized. In the univariate case, this notion reduces to monotonicity of inverse demand. In the case, where the consumer's utility depends on consumer unobserved heterogeneity~$\varepsilon$ only
 through the index~$z'\varepsilon$, where~$z$ is the vector of good qualities, inverse demand is the gradient of a convex function. We show that this notion of multivariate monotonicity is a suitable shape restriction to identify nonseparable simultaneous equations models, generalizing the quantile identification method of \citet{Matzkin:2003} and complementing results in \citet{Matzkin:2015}, where monotonicity is imposed equation by equation\footnote{Not all results in \citet{Matzkin:2003} and \citet{HMN:2010} require normalization of the distribution of unobserved heterogeneity, as we do here.}.

\subsubsection*{\protect Closely related work}

On the identification of multi-attribute hedonic equilibrium models, \citet{EHN:2004} require marginal utility (resp. marginal product) to
be additively separable in unobserved consumer (resp. producer)
characteristic. \citet{HMN:2010} show that demand is nonparametrically
identified under a single crossing condition and that various additional
shape restrictions allow identification of preferences without additive
separability (see also \citeauthor{HMN:2003} \citeyear{HMN:2003,HMN:2005}).
\citet{EHN:2004} emphasize the one-dimensional case but argue that their results can
be extended to multivariate attributes under a separability assumption in
the utility (without restricting the distribution of unobserved heterogeneity).
Our paper directly  follows \citet{HMN:2010} and generalizes the insight therein to allow for heterogeneity in response to different dimensions of amenities or qualitites. However, the analyses in \citet{HMN:2010} and ours are non nested. \citet{HMN:2010} considers several specifications with Barten scales that are outside the scope of our generalization.
\citet{Nesheim:2013} (developed independently and concurrently) is the most closely related paper and complements our work. He achieves identification under an additive separability restriction,
but without restricting the distribution of unobserved heterogeneity. He also imposes conditions from \citet{MTW:2005} to obtain differentiability of the price.
\citet{CMN:2010} derive a matching
formulation of hedonic models and thereby highlight the close relation
between empirical strategies in matching markets and in hedonic markets.
\citet{CMP:2016} have a section on identification (posterior to this paper), where they
use a similar strategy. Two main differences arise from the difference between matching and hedonic models. On the one hand, \citet{CMP:2016} need an extra step to account for the fact that in their matching model the price function is not observed. On the other hand, they do not have to worry about regularity of endogenous objects such as the price and distribution of goods in the hedonic equilibrium setting.
\citet{GS:2012} extend the work of \citet{CS:2006} and identify preferences in marriage markets, where agents match on discrete characteristics, as the unique solution of an optimal transport problem, but unlike the present paper, they are restricted to the case with a discrete quality space. The strategy is extended to the set-valued case by \citet{CGS:2013}, who use subdifferential calculus to identify  dynamic discrete choice problems. \citet{DGH:2014} use network flow techniques to identify discrete hedonic models.

On the identification of nonlinear simultaneous equations models, \citet{Matzkin:2015} uses equation by equation monotonicity in the one dimensional unobservables and exclusion restrictions.  \citet{BH:2018} consider transformation models, as in \citet{Matzkin:2008} (see also \citet{Matzkin:2013}). These strategies do not require normalization of the distribution of unobserved heterogeneity.
\citet{SSS:2018} also (independently and in a very different context) use cyclical monotonicity for identification in panel discrete choice models.
\citet{EGH:2012} and \citet{GH:2012} propose a notion of multivariate quantile based on Brenier's Theorem. \citet{CCG:2014} (coetaneous with the present paper) propose a conditional version of the optimal transport quantiles of \citet{EGH:2012} and \citet{GH:2012} and apply it to quantile regression, whereas \citet{CGHH:2017} (also coetaneous with this paper) apply optimal transport quantiles to the definition of statistical depth, ranks and signs. However, these papers do not consider identification. The present work is, to the best of our knowledge, the first to apply the notion of multivariate quantiles based on optimal transport results to the identification of simultaneous equations, thus providing a multivariate extension of \citet{Matzkin:2003}'s quantile identification idea.

\subsubsection*{\protect Organization of the paper}

The remainder of the paper is organized as follows. Section~%
\ref{sec:model} sets the hedonic equilibrium framework out. Section~\ref%
{sec:univariate} gives a brief account of the methodology and main results on nonparametric
identification of preferences in single attribute hedonic models, mostly
drawn from \citet{EHN:2004} and \citet{HMN:2010}. Section~\ref%
{sec:multiple} shows how these results
and the shape restrictions that drive them can be extended to the case of
multiple attribute hedonic equilibrium markets.
Section~\ref{sec:se} derives multivariate shape restrictions to identify nonseparable simultaneous equations models. The last section concludes.
Proofs of the main results are
relegated to the appendix, as are necessary background definitions and results on optimal transport
theory and a list of our notational conventions.


\section{Hedonic equilibrium and the identification
problem}

\label{sec:model} We consider a competitive environment,
where consumers and producers trade a good or contract, fully characterized
by its type or quality $z$. The set of feasible qualities $Z\subseteq
\mathbb{R}^{d_z}$ is assumed compact and given a priori, but the distribution of the qualities actually traded arises
endogenously in the hedonic market equilibrium, as does their price schedule
$p(z)$. Producers are characterized by their type $\tilde y\in \tilde
Y\subseteq \mathbb{R}^{d_{\tilde y}}$ and consumers by their type $\tilde
x\in \tilde X\subseteq \mathbb{R}^{d_{\tilde x}}$. Type distributions $%
P_{\tilde x}$ on $\tilde X$ and $P_{\tilde y}$ on $\tilde Y$ are given
exogenously, so that entry and exit are not modelled. Consumers and producers
are price takers and maximize quasi-linear utility $U(\tilde x,z)-p(z)$ and
profit $p(z)-C(\tilde y,z)$ respectively. Utility $U(\tilde x,z)$
(respectively cost $C(\tilde y,z)$) is upper (respectively lower)
semicontinuous and bounded. In addition, the set of qualities $Z(\tilde x,\tilde y)$
that maximize the joint surplus $U(\tilde x,z)-C(\tilde y,z)$ for each pair of
types $(\tilde x,\tilde y)$ is assumed to have a measurable selection. Then, %
\citet{Ekeland:2010} and \citet{CMN:2010} show that an equilibrium
exists in this market, in the form of a price function $p$ on $Z$, a joint
distribution $P_{\tilde xz}$ on $\tilde X\times Z$ and $P_{\tilde yz}$ on $%
\tilde Y\times Z$ such that their marginal on $Z$ coincide, so that market
clears for each traded quality $z\in Z$. Uniqueness is not guaranteed, in
particular prices are not uniquely defined for non traded qualities in
equilibrium. Purity is not guaranteed either: an equilibrium specifies a
conditional distribution $P_{z|\tilde x}$ (respectively $P_{z|\tilde y}$) of
qualities consumed by type $\tilde x$ consumers (respectively produced by
type $\tilde y$ producers). The quality traded by a given producer-consumer
pair $(\tilde x,\tilde y)$ is not uniquely determined at equilibrium without additional assumptions.

\citet{Ekeland:2010} and \citet{CMN:2010} further
show that a pure equilibrium exists and is unique, under the following additional
assumptions. First, type distributions $P_{\tilde x}$ and $P_{\tilde y}$ are
absolutely continuous. Second, gradients of utility and cost, $\nabla_{\tilde
x}U(\tilde x,z)$ and $\nabla_{\tilde y}C(\tilde y,z)$ exist and are
injective as functions of quality $z$. The latter condition, also known as
the {\em Twist Condition} in the optimal transport literature, ensures that all
consumers of a given type $\tilde x$ (respectively all producers of a given type $%
\tilde y$) consume (respectively produce) the same quality $z$
at equilibrium.

The identification problem consists in the recovery of
structural features of preferences and technology from observation of traded
qualities and their prices in a single market.
The solution concept we impose in our identification analysis is the following feature of hedonic equilibrium, i.e., maximization of surplus generated by a trade.
\begin{myassump}{EC}[Equilibrium concept]
\label{ass:ec}
The joint distribution $\gamma$ of $(\tilde X,Z,\tilde Y)$ and the price function $p$ form an hedonic equilibrium, i.e., they satisfy the following.
The joint distribution $\gamma$ has marginals~$P_{\tilde x}$ and~$P_{\tilde y}$ and for~$\gamma$ almost all~$(\tilde x,z,\tilde y)$, \begin{eqnarray*}
U(\tilde x,z)-p(z)&=&\max_{z'\in Z}\left( U(\tilde x,z')-p(z')\right),\\
p(z)-C(\tilde y,z)&=&\max_{z'\in Z}\left( p(z')-C(\tilde y,z)\right).
\end{eqnarray*}
In addition, observed qualities $z\in Z(\tilde x,\tilde y)$ maximizing joint surplus $U(\tilde x,z)-C(\tilde y,z)$ for each $\tilde x\in\tilde X$ and $\tilde y\in\tilde Y$, lie in the interior of the set of feasible qualities $Z$ and $Z(\tilde x,\tilde y)$ is assumed to have a measurable selection. The joint surplus $U(\tilde x,z)-C(\tilde y,z)$ is finite everywhere. We assume full participation in the market\footnote{
The
possibility of non participation can be modelled by adding isolated
points to the sets of types and renormalizing distributions accordingly (see
Section~1.1 of \citet{CMN:2010} for details).
}.
\end{myassump}
Given observability of prices
and the fact that producer type $\tilde y$ (respectively consumer type $%
\tilde x$) does not enter into the utility function $U(\tilde x,z)$
(respectively cost function $C(\tilde y,z)$) directly, we may consider the
consumer and producer problems separately and symmetrically (see \citet{EHN:2002a}). We focus on the
consumer problem and on identification of utility function $U(\tilde x,z)$.
Under assumptions ensuring purity and uniqueness of equilibrium, the model
predicts a deterministic choice of quality $z$ for a given consumer type $%
\tilde x$. We do not impose such assumptions, but we need
to account for heterogeneity in consumption patterns even in case of unique and pure equilibrium. Hence, we assume, as
is customary, that consumer types~$\tilde x$ are only partially observable
to the analyst. We write~$\tilde x=(x,\varepsilon)$, where $x\in X\subseteq\mathbb{R}%
^{d_x}$ is the observable part of the type vector, and~$\varepsilon\in%
\mathbb{R}^{d_\varepsilon}$ is the unobservable part.
We shall make a
separability assumption that will allow us to specify constraints on the
interaction between consumer unobservable type $\varepsilon$ and good
quality~$z$ in order to identify interactions between observable type~$x$
and good quality~$z$.

\begin{myassump}{H}[Unobservable heterogeneity]
\label{ass:uh} Consumer type~$\tilde x$ is composed of
observable type~$x$ with distribution~$P_x$ on~$X\subseteq\mathbb{R}^{d_x}$ and
unobservable type~$\varepsilon$ with a priori specified conditional distribution~$%
P_{\varepsilon\vert x}$ on~$\mathbb{R}^{d_\varepsilon}$, with~$d_\varepsilon\leq d_z$.
The utility of
consumers can be decomposed as~$U(\tilde x,z)=\bar
U(x,z)+\zeta(x,\varepsilon,z)$, where the functional form of~$\zeta$ is known,
but that of~$\bar U$ is~not\footnote{Despite the notation used,
$\bar U$ should not necessarily be interpreted as {\em mean utility},
since we allow for a general choice of $\zeta$ and $P_{\varepsilon\vert x}$.
If this interpretation is desirable in a particular application,~$\zeta$
and~$P_{\varepsilon\vert x}$ can be chosen in such a way that $\mathbb E[\zeta(x,\varepsilon,z)\vert x]=0$.}.
\end{myassump}
The main primitive object of interest is the deterministic component of
utility $\bar U(x,z)$. For convenience, we shall use the transformation~$V(x,z):=p(z)-\bar U(x,z)$. The latter will be called the consumer's {\em potential}, in line with the optimal transport terminology. Since the price is assumed to be identified, identification of~$V$ is equivalent to identification of~$\bar U$.
To achieve identification, we require fixing a choice of function~$\zeta$: a leading example, discussed in Section~\ref{sec:lmu}, is the case~$\zeta(x,\varepsilon,z)=z'\varepsilon$.
Identification also requires fixing the conditional distribution~$P_{\varepsilon\vert x}$ of unobserved heterogeneity. This corresponds to the normalization of the distribution of scalar unobservable utility in existing quantile identification strategies. Discrete choice models also typically rely on a fixed distribution for unobservable heterogeneity (generally extreme valued). The requirements to fix both~$\zeta$ and~$P_{\varepsilon\vert x}$ will be relaxed to some extent in Section~\ref{sec:mm} which entertains the possibility of further identification power using information from multiple markets.


\section{\protect Single market identification with scalar
attribute}

\label{sec:univariate}

In this section, we recall and reformulate
results of \citet{HMN:2010} on identification of single attribute
hedonic models. Suppose, for the purpose of this section, that $%
d_z=d_\varepsilon=1$, so that unobserved heterogeneity is scalar, as is the
quality dimension. Suppose also that $\zeta$ is
twice continuously differentiable in~$z$ and~$\varepsilon$.
\begin{myassump}{R}[Regularity of preferences and technology]
\label{ass:supp} The functions $\bar U(x,z)$, $C(\tilde y,z)$ and $\zeta(x,\varepsilon,z)$ are twice continuously differentiable with respect~$\varepsilon$ and~$z$.
\end{myassump}
Suppose further (for ease of exposition) that~$V$ is
twice continuously differentiable in~$z$.
The main identifying assumption is a shape restriction on utility called single crossing, Spence-Mirlees or supermodularity, depending on the context.
\begin{myassump}{S1}[Spence-Mirlees]
\label{ass:sm} We have $d_z=1$ and $\zeta_{\varepsilon z}(x,\varepsilon,z)>0$ for all $x,\varepsilon,z$.
\end{myassump}
The first order condition of the consumer problem yields
\begin{eqnarray}  \label{eq:foc}
\zeta_z(x,\varepsilon,z)=V_z(x,z),
\end{eqnarray}
which, under Assumption~\ref{ass:sm}, implicitly defines an inverse demand
function $z\mapsto\varepsilon(x,z)$, which specifies which unobserved type
consumes quality $z$. Combining the second order condition $%
\zeta_{zz}(x,\varepsilon,z)<V_{zz}(x,z)$ and further differentiation of (\ref%
{eq:foc}), i.e., $\zeta_{zz}(x,\varepsilon,z)+\zeta_{\varepsilon
z}(x,\varepsilon,z)\varepsilon_z(x,z)=V_{zz}(x,z)$, yields
\begin{eqnarray*}
\varepsilon_z(x,z)=\frac{V_{zz}(x,z)-\zeta_{zz}(x,\varepsilon,z)}{%
\zeta_{\varepsilon z}(x,\varepsilon,z)}>0.
\end{eqnarray*}
Hence the inverse demand is increasing and is therefore identified as the
unique increasing function that maps the distribution $P_{z|x}$ to the
distribution $P_{\varepsilon\vert x}$, namely the quantile transform. Denoting $F$
the cumulative distribution function corresponding to the distribution~$P$,
we therefore have identification of inverse demand according to the strategy
put forward in \citet{Matzkin:2003} as:
\begin{eqnarray*}
\varepsilon(x,z)=F^{-1}_{\varepsilon|x}\left( F_{z|x}(z|x) \right).
\end{eqnarray*}
The single crossing condition of Assumption~\ref{ass:sm} on the consumer
surplus function $\zeta(x,\varepsilon,z)$ yields positive assortative
matching, as in the \citet{Becker:1973} classical model. Consumers with
higher taste for quality $\varepsilon$ will choose higher qualities in
equilibrium and positive assortative matching drives identification of
demand for quality. The important feature of Assumption~\ref{ass:sm} is
injectivity of $\zeta_z(x,\varepsilon,z)$ relative to $\varepsilon$ and a
similar argument would have carried through under $\zeta_{z\varepsilon}(x,\varepsilon,z)<0
$, yielding negative assortative matching instead.

Once inverse demand is identified, the consumer potential $V(x,z)$,
hence the utility function~$\bar U(x,z)$, can be recovered
up to a constant by integration of the first order condition~(\ref{eq:foc}):
\begin{eqnarray*}
\bar U(x,z)=p(z)-V(x,z)=p(z)-\int_0^z\zeta_z(x,\varepsilon(x,z^{\prime}),z^{\prime})dz^{\prime}.
\end{eqnarray*}
We summarize the previous discussion in the following identification statement, originally due to \citet{HMN:2010}.
\begin{proposition}
\label{prop:ehn}
Under Assumptions~\ref{ass:ec}, \ref{ass:uh}, \ref{ass:supp}, \ref{ass:sm}, $\bar U(x,z)$ is nonparametrically identified, in the sense that $z\mapsto \bar U_z(x,z)$ is the only marginal utility function compatible with the pair~$(P_{xz},p)$,
i.e., any other marginal utility function coincides with it, $P_{z\vert x}$ almost surely.
\end{proposition}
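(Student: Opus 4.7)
The strategy is to turn the heuristic derivation preceding the statement into a uniqueness argument. By Assumption~\ref{ass:ec}, $P_{xz}$-almost every observed pair $(x,z)$ is generated by some unobserved type $\varepsilon$ drawn from $P_{\varepsilon\vert x}$ choosing $z$ to maximize $\zeta(x,\varepsilon,z')-V(x,z')$ over $z'\in Z$, and this maximizer lies in the interior of $Z$. Smoothness (Assumption~\ref{ass:supp}) therefore yields the first order condition $\zeta_z(x,\varepsilon,z)=V_z(x,z)$ and the second order condition $\zeta_{zz}(x,\varepsilon,z)\le V_{zz}(x,z)$ almost surely.

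The first step is to show that the FOC implicitly defines a monotone inverse demand map $z\mapsto\varepsilon(x,z)$. Assumption~\ref{ass:sm} gives $\zeta_{\varepsilon z}>0$, so $\zeta_z(x,\cdot,z)$ is strictly monotone in~$\varepsilon$ and the implicit function theorem applies to the FOC. Differentiating the FOC in~$z$ and using the SOC yields
\[
\varepsilon_z(x,z)=\frac{V_{zz}(x,z)-\zeta_{zz}(x,\varepsilon(x,z),z)}{\zeta_{\varepsilon z}(x,\varepsilon(x,z),z)}\ge 0.
\]
Strict monotonicity on the interior of the support of $P_{z\vert x}$ follows because a flat region of $\varepsilon(x,\cdot)$ would force a positive mass of $\varepsilon$'s drawn from $P_{\varepsilon\vert x}$ to be concentrated on a single quality, which the push-forward relation with $P_{z\vert x}$ rules out.

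The second step is the Matzkin quantile identification argument: since $\varepsilon(x,\cdot)$ is a strictly increasing map on the support of $P_{z\vert x}$ whose push-forward is the a priori specified $P_{\varepsilon\vert x}$, uniqueness of the monotone rearrangement on $\mathbb R$ gives $\varepsilon(x,z)=F^{-1}_{\varepsilon\vert x}(F_{z\vert x}(z\vert x))$ for $P_{z\vert x}$-a.e.~$z$. Substituting back into the FOC yields
\[
\bar U_z(x,z)=p'(z)-\zeta_z\bigl(x,\,F^{-1}_{\varepsilon\vert x}(F_{z\vert x}(z\vert x)),\,z\bigr),
\]
which depends only on the observables $(P_{xz},p)$ together with the a priori known $\zeta$ and $P_{\varepsilon\vert x}$. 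Hence any other marginal utility $\bar U'_z$ arising from an equilibrium compatible with the same $(P_{xz},p)$ must satisfy the same identity $P_{z\vert x}$-almost surely and therefore coincide with $\bar U_z$.

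The step I expect to be the main obstacle is upgrading $\varepsilon_z\ge 0$ to strict monotonicity of $\varepsilon(x,\cdot)$ on the support of $P_{z\vert x}$ in a measure-consistent way, since this is what allows the invocation of the uniqueness of the monotone rearrangement; it requires combining the SOC with the push-forward structure linking $P_{z\vert x}$ and $P_{\varepsilon\vert x}$, together with mild regularity of $P_{\varepsilon\vert x}$ implicit in the notation $F^{-1}_{\varepsilon\vert x}$. Once this is handled, the remaining substitutions and the uniqueness conclusion are routine.
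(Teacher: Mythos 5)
Your proposal is correct and takes essentially the same route as the paper, whose ``proof'' of Proposition~\ref{prop:ehn} is precisely the Section~\ref{sec:univariate} discussion you reproduce: the first order condition plus the Spence--Mirlees condition yield an increasing inverse demand, identified as the quantile transform $\varepsilon(x,z)=F^{-1}_{\varepsilon\vert x}(F_{z\vert x}(z\vert x))$ following Matzkin's strategy, and substitution back into the first order condition pins down $\bar U_z(x,z)=p'(z)-\zeta_z(x,\varepsilon(x,z),z)$, $P_{z\vert x}$-almost surely. The only divergence is your added concern about upgrading $\varepsilon_z\ge 0$ to strict monotonicity, which the paper sidesteps by simply using the strict second order condition, and which is in any case not what the rearrangement argument requires (uniqueness of the non-decreasing push-forward map rests on atomlessness of $P_{z\vert x}$, and your flat-region heuristic actually produces an atom of $P_{\varepsilon\vert x}$ rather than a mass of tastes at one quality), so this does not affect the validity of your main chain of reasoning.
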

Unlike the demand function, which is identified without knowledge of the
surplus function~$\zeta$, as long as the latter satisfies single crossing (Assumption~\ref%
{ass:sm}), identification of the preference function~$\bar U(x,z)$ does
require a priori knowledge of the function~$\zeta$.


\section{\protect Single market identification with multiple
attributes}

\label{sec:multiple}

This section develops the multivariate analogue of identification results in Section~\ref{sec:univariate}.
The strategy follows the same lines. First, a shape restriction on the utility function, analogue to Assumption~\ref{ass:sm}, and a consequence of
maximization behavior, called {\em cyclical monotonicity}, will identify the inverse demand: we will show that a single type~$\varepsilon(x,z)$
chooses good quality~$z$ at equilibrium. Second, formally, the utility is then recovered from the first order condition of the consumer's program.
The latter step, however, involves significant difficulties, due to the possible lack of differentiability of the endogenous price function.

\subsection{Identification of inverse demand}

\subsubsection{Shape restriction}

In the one dimensional case, identification of inverse demand was shown under the single crossing Assumption~\ref{ass:sm}. We noted
that the sign of the single crossing condition was not important for the
identification result. Instead, what is crucial is the following, weaker, condition, which is
commonly known as the {\em Twist Condition} in the optimal transport literature. The crucial condition, maintained throughout, is Assumption~\ref{ass:twist}. As shown in \citet{CMN:2010}, it holds when for each distinct pair of consumer types~$(\varepsilon_1,\varepsilon_2)$, the function~$z\mapsto\zeta(x,\varepsilon_1,z)-\zeta(x,\varepsilon_2,z)$ has no critical point. It is satisfied for example in the case~$\zeta(x,\varepsilon,z)=z'\varepsilon$, in the case~$\zeta(x,\varepsilon,z)=F(x,z'\varepsilon),$ when~$z'\varepsilon>0,$ and $F$ is increasing and convex in its second argument, or in the case~$\zeta(x,\varepsilon,z)=\sum_{k=1}^{d}F_k(x,\varepsilon_k,z_k),$ where, for each~$k$ and~$x$, $F_k$ is supermodular in~$(\varepsilon_k,z_k)$. All these examples are discussed in Sections~\ref{sec:lmu} and~\ref{sec:exp} below.
\begin{myassump}{S2}[Twist Condition]
\label{ass:twist}
For all $x$ and $z$, the following hold.
\begin{enumerate}
\item[(A)]
The gradient $\nabla_z\zeta(x,\varepsilon,z)$ of $\zeta(x,\varepsilon,z)$ in
$z$ is injective as a function of $\varepsilon\in\mbox{Supp}(P_{\varepsilon\vert x})$.
\item[(B)] The gradient $\nabla_\varepsilon\zeta(x, \varepsilon,z)$ of $\zeta(x,\varepsilon,z)$ in
$\varepsilon$ is injective as a function of $z\in Z$.
\end{enumerate}
\end{myassump}
From \citet{GN:65}, it is sufficient that $D^2_{z\varepsilon}\zeta(x,\varepsilon,z)$ be positive definite everywhere for Assumption~\ref{ass:twist} to be satisfied.
Alternative sets of sufficient conditions are given in Theorem~2 of \citet{Mas-Colell:79}\footnote{Relatedly, \citet{BGH:2013} propose injectivity results under a gross substitutes condition.}.
Assumption~\ref{ass:twist}, unlike single crossing, is well defined in the multivariate case, and we shall show,
using recent developments in optimal transport theory, that it continues to
deliver the desired identification in the multivariate case.

\subsubsection{Cyclical monotonicity}
An important implication of Assumption~\ref{ass:ec} is that traded quality $z$
maximizes the joint surplus\footnote{This observation is the
basis for the characterization of hedonic models as transferable utility matching models in \citet{CMN:2010}.}~$U(\tilde x,z)-C(\tilde y,z)$.
Let, therefore,~$S(x,\varepsilon,\tilde y):=\sup_{z\in Z}[U((x,\varepsilon),z)-C(\tilde y,z)]$ be the surplus of a consumer-producer pair~$((x,\varepsilon),\tilde y)$ at equilibrium.
Suppose consumer~$(x,\varepsilon_0)$ (resp.~$(x,\varepsilon_1)$) is paired at equilibrium with producer~$\tilde y_0$ (resp.~$\tilde y_1$) to exchange good quality~$z$.
Then, as shown in the proof of Lemma~\ref{lemma:idid} in the Appendix, their total surplus is at least as large in the current consumer-producer matching as
it would be, were they to switch partners. This is a property of the optimal allocation called {\em cyclical monotonicity}. The total surplus cannot be improved by a cycle of reallocations of consumers and producers. Applied here to cycles of length two, cyclical monotonicity yields:
\begin{eqnarray*}
S(x,\varepsilon_0,\tilde y_0) + S(x,\varepsilon_1,\tilde y_1) &\geq& S(x,\varepsilon_0,\tilde y_1) + S(x,\varepsilon_1,\tilde y_0)\\
&\geq& U((x, \varepsilon_0), z) -C(\tilde y_1,z)+U((x, \varepsilon_1), z) -C(\tilde y_0, z) \\
&=&S(x,\varepsilon_0,\tilde y_0) + S(x,\varepsilon_1,\tilde y_1),
\end{eqnarray*}
where the first inequality holds because of cyclical monotonicity and the second inequality holds by definition of the surplus function~$S$
as a supremum. Hence, equality holds throughout in the previous display. Therefore, choice~$z$ maximizes both $z\mapsto U((x, \varepsilon_0), z) -C(\tilde y_0,z)$
and $z\mapsto U((x, \varepsilon_1), z) -C(\tilde y_0, z)$. It follows that~$\nabla_z\zeta(x,\varepsilon_1,z)=\nabla_z\zeta(x,\varepsilon_0,z)$. The Twist Assumption~\ref{ass:twist} then yields
equality of~$\varepsilon_0$ and~$\varepsilon_1$ and the following lemma (proved formally in the Appendix).

\begin{lemma}[Identification of inverse demand]
\label{lemma:idid}
Under Assumptions~\ref{ass:ec}, \ref{ass:uh}, \ref{ass:twist}(A) and~\ref{ass:supp},
if consumers with characteristics~$(x,\varepsilon_0)$ and~$(x,\varepsilon_1)$ consume the same good quality~$z$
at equilibrium, then~$\varepsilon_0=\varepsilon_1$.
\end{lemma}
We see from Lemma~\ref{lemma:idid} that identification of inverse demand holds under conditions that are analogous to
the scalar case, where the Twist condition replaces Spence-Mirlees as a shape restriction. We also see that the identification proof
also relies on a notion of monotonicity. We will push this analogy further in the Section~\ref{sec:se} and show that the inverse demand function~$z\mapsto\varepsilon(x,z)$
itself satisfies a generalized form of monotonicity, we call~$\zeta$-monotonicity, since its definition involves the function~$\zeta$.

\subsection{Identification of marginal utility}

Heuristically, once identification of inverse demand is established, and~$\varepsilon(x,z)$ is uniquely defined in equilibrium,
the first order condition of the consumer's problem~$\sup_{z\in Z}\{\zeta(x,\varepsilon,z)-V(x,z)\}$
delivers identification of marginal utility~$\nabla_z\bar U(x,z)$. However, using the first order condition presupposes
smoothness of the potential~$V$, hence of the endogenous price function~$z\mapsto p(z)$. Conditions for differentiability of the potential~$V$
in optimal transportation problems are given by \citet{MTW:2005}. They are applied to identification in hedonic models in \citet{Nesheim:2013}.
However, the conditions in \citet{MTW:2005} are not transparent and they are known to be very strong, excluding simple forms of the surplus, such as~$S(\tilde x,\tilde y):=\vert \tilde x-\tilde y\vert^p$, for all~$p\ne2$ for instance. The remainder of this section, therefore, will be devoted to proving a weaker form of differentiability
of the price function, which can be used to identify marginal utility from the first order condition of the consumer's problem.

The consumer's problem yields the expression for indirect utility:
\begin{eqnarray}  \label{eq:zconv}
\sup_{z\in Z}\{\zeta(x,\varepsilon,z)-V(x,z)\}:=V^\zeta(x,\varepsilon).
\end{eqnarray}
Equation~(\ref{eq:zconv}) defines a generalized notion of convex
conjugation, in which the consumer's indirect utility is the conjugate of~$V(x,z)=p(z)-\bar U(x,z)$. This notion of conjugation can be inverted, similarly to
convex conjugation, into:
\begin{eqnarray}  \label{eq:max}
V^{\zeta\zeta}(x,z)=\sup_{\varepsilon\in\mathbb R^{d_\varepsilon}}\{\zeta(x,\varepsilon,z)-V^%
\zeta(x,\varepsilon)\},
\end{eqnarray}
where $V^{\zeta\zeta}$ is called the {\em double conjugate} of $V$.
In the special case~$\zeta(x,\varepsilon,z)=z'\varepsilon$,~$\zeta$-conjugate simplifies to the ordinary
convex conjugate of convex analysis and convexity of a lower semi-continuous function is equivalent to equality with its double conjugate. Hence, by extension,
equality with its double~$\zeta$-conjugate defines a generalized notion of convexity (see Definition~2.3.3 page~86 of \citet{Villani:2003}).
\begin{definition}[$\protect\zeta$-convexity]
\label{def:zconv} A function $V$ is called $\zeta$-convex if $V=V^{\zeta\zeta}$.
\end{definition}
We establish~$\zeta$-convexity of the potential as a step towards a notion of differentiability, in analogy with
convex functions, which are locally Lipschitz, hence almost surely differentiable, by Rademacher's Theorem (see for instance \citet{Villani:2009}, Theorem~10.8(ii)).
It also delivers a notion of~$\zeta$-monotonicity for inverse demand, discussed in Section~\ref{sec:se}.
\begin{lemma}
\label{lemma:zconv}
Under~\ref{ass:ec} and~\ref{ass:uh}, the function~$z\mapsto V(x,z)$ is~$P_{z\vert x}$ a.s.~$\zeta$-convex, for all $x$.
\end{lemma}
This result only provides information for pairs~$(x,z)$, where type~$x$ consumers choose good quality~$z$ in equilibrium. In order to obtain a global smoothness result on the potential~$V$, we need conditions under which the endogenous distribution of good qualities traded in equilibrium is absolutely continuous with respect to Lebesgue measure on~$\mathbb R^{d_z}$. They include absolute continuity of the distribution of unobserved heterogeneity, additional smoothness conditions on preferences and technology and the Twist Assumption~\ref{ass:twist}(B), which requires the dimension of unobserved heterogeneity to be the same as the dimension of the good quality space, i.e.,~$d_\varepsilon=d_z$ (this will be relaxed in Section~\ref{sec:ld}).

\begin{myassump}{H'}
\label{ass:uh'} Assumption~\ref{ass:uh} holds and the distribution of unobserved tastes $P_{\varepsilon\vert x}$ is absolutely continuous on~$\mathbb R^{d_z}$ with respect to Lebesgue measure for all $x$.
\end{myassump}

\begin{myassump}{R'}[Conditions for absolute continuity of $P_{z\vert x}$]
\label{ass:supp3} The following hold.
\begin{enumerate}
\item The Hessian of total surplus
$D^2_{zz} \left( U(x,\varepsilon,z) - C(\tilde y,z) \right) $ is bounded above; that is $\Vert D^2_{zz} \left( U(x,\varepsilon,z) - C(\tilde y,z) \right) \Vert \leq M_1$ for all $x,\varepsilon,z,\tilde y$, for some fixed $M_1$.
\item If the support of $P_{\varepsilon\vert x}$ is unbounded, for each $x\in X$, $\Vert\nabla_z\zeta(x,\varepsilon,z)\Vert\rightarrow\infty$ as $\|\varepsilon\|\rightarrow\infty$, uniformly in $z\in Z$.
\item The matrix $D^2_{\varepsilon z }\zeta(x,\varepsilon,z)$ has full rank for all $x, \varepsilon, z$. Its inverse $[D^2_{\varepsilon z }\zeta(x,\varepsilon,z)]^{-1}$ has uniform upper bound~$M_0$:
$\left\Vert[D^2_{\varepsilon z }\zeta(x,\varepsilon,z)]^{-1}\right\Vert\leq M_0$ for all $x,\varepsilon,z$, for some fixed~$M_0$.
\end{enumerate}
\end{myassump}

We then obtain our principal intermediate lemma, which is of independent interest for the theory of hedonic equilibrium. The proof can be found in the technical appendix \citet{CGHP:2020app}.
\begin{lemma}
\label{lemma:abs3}
Under Assumptions~\ref{ass:ec}, ~\ref{ass:uh'},~\ref{ass:twist} and~\ref{ass:supp3}, the endogenous distribution $P_{z\vert x}$ of qualities traded at equilibrium is absolutely continuous with respect to Lebesgue measure.
\end{lemma}
Lemma~4 in the technical appendix \citet{CGHP:2020app} shows everywhere differentiability of the double conjugate potential~$V^{\zeta\zeta}$. This doesn't imply differentiability of~$V$ everywhere, since~$V$ is~$\zeta$-convex (i.e.,~$V=V^{\zeta\zeta}$) only~$P_{z\vert x}$ almost everywhere. However, combined with Lemma~\ref{lemma:abs3}, this yields approximate differentiability of~$z\mapsto V(x,z)$ as defined in Definition~6 of the technical appendix \citet{CGHP:2020app}. Using uniqueness of the approximate gradient of an approximately differentiable function then yields identification of marginal utility~$\nabla_z\bar U(x,z)$ from the first order condition
\begin{eqnarray}
\label{eq:FOC}
\nabla_z\zeta(x,\varepsilon(x,z),z)=\nabla_{ap,z}\;V(x,z)=\nabla_{ap} \; p(z)-\nabla_z\bar U(x,z),
\end{eqnarray}
where~$\nabla_{ap,(z)}$ denotes the approximate gradient (with respect to~$z$) of Definition~6 of the technical appendix \citet{CGHP:2020app}, and where the inverse demand function $\varepsilon(x,z)$ is uniquely determined, by Lemma~\ref{lemma:idid}. This yields our main identification theorem.

\begin{theorem}
\label{thm:id2} Under Assumptions~\ref{ass:ec}, ~\ref{ass:uh'},~\ref{ass:twist} and~\ref{ass:supp3}, the following hold:
\begin{enumerate}
\item $\bar U(x,z)$ is
nonparametrically identified, in the sense that $z\mapsto \nabla_z\bar U(x,z)$ is the only marginal utility function compatible with the
pair $(P_{xz},p)$,
i.e., any other marginal utility function coincides with it, $P_{z\vert x}$ almost surely.
\item For all $x\in X$, $\bar U(x,z)=p(z)-V(x,z)$ and $z\mapsto V(x,z)$ is $P_{z\vert x}$ almost everywhere equal to the $\zeta$-convex solution to the problem $\min_V\left(\mathbb{E}_z[V(x,z)\vert x]+\mathbb{E}_\varepsilon%
[V^\zeta(x,\varepsilon)\vert x]\right)$, with $V^\zeta$ defined in~(\ref{eq:zconv}).
\end{enumerate}
\end{theorem}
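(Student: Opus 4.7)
The plan is to reduce Theorem~\ref{thm:id2} to Theorem~\ref{thm:id} by using the two intermediate lemmas to replace the endogenous shape restriction (Assumption~\ref{ass:zconv}) with the primitive regularity conditions in Assumption~\ref{ass:supp3}. The key observation is that Theorem~\ref{thm:id} already delivers identification only $P_{z|x}$-almost surely, so any modification of $V$ on a $P_{z|x}$-null set is inconsequential. Since $V^{\zeta\zeta}$ is always $\zeta$-convex by construction (the double $\zeta$-conjugate is its own double $\zeta$-conjugate), replacing $V$ by $V^{\zeta\zeta}$ puts us in the setting of Theorem~\ref{thm:id}, provided this replacement leaves the identification target unchanged $P_{z|x}$-a.s.

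First, I would invoke Lemma~\ref{lemma:abs3}: Assumptions~\ref{ass:ec}, \ref{ass:uh}, \ref{ass:twist} and \ref{ass:supp3} imply that $P_{z|x}$ is absolutely continuous with respect to Lebesgue measure. Next, I would apply Lemma~\ref{lemma:zconv} (valid under Assumptions~\ref{ass:ec} and \ref{ass:uh} alone) to obtain $V(x,z)=V^{\zeta\zeta}(x,z)$ for $P_{z|x}$-almost every $z$, for every $x$. Define $\tilde V(x,z):=V^{\zeta\zeta}(x,z)$ and $\tilde{\bar U}(x,z):=p(z)-\tilde V(x,z)$; by construction $\tilde V$ is $\zeta$-convex in $z$, so Assumption~\ref{ass:zconv} holds for $\tilde V$. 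Since $\tilde V=V$ on a set of full $P_{z|x}$-measure, the pair $(\tilde{\bar U},p)$ is compatible with the same observed $(P_{xz},p)$, and Assumptions~\ref{ass:ec}, \ref{ass:uh}, \ref{ass:supp} (which is implied by \ref{ass:supp2}, hence by \ref{ass:supp3}) and \ref{ass:twist} are all preserved.

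Third, I would apply Theorem~\ref{thm:id} to $\tilde V$: it guarantees that $z\mapsto\nabla_z\tilde{\bar U}(x,z)$ is the unique marginal utility function compatible with $(P_{xz},p)$, $P_{z|x}$-a.s., and that $\tilde V$ is characterized as the $\zeta$-convex solution to $\min_V\left(\mathbb{E}_z[V(x,z)\vert x]+\mathbb{E}_\varepsilon[V^\zeta(x,\varepsilon)\vert x]\right)$. Since $\tilde V=V$ and $\tilde{\bar U}=\bar U$ $P_{z|x}$-a.s., the gradient $\nabla_z\bar U(x,z)$ is uniquely pinned down $P_{z|x}$-a.s., establishing part~(1). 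Part~(2) follows because $z\mapsto V(x,z)$ is then $P_{z|x}$-a.e. equal to the $\zeta$-convex optimizer $\tilde V(x,z)$ identified by Theorem~\ref{thm:id}(2).

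The main obstacle, already absorbed into Lemma~\ref{lemma:abs3}, is the absolute continuity statement for the endogenous quality distribution; once that is in hand, the rest of the argument is essentially a bookkeeping reduction. A secondary subtlety is confirming that all hypotheses of Theorem~\ref{thm:id} transfer cleanly from $V$ to $V^{\zeta\zeta}$ on the relevant $P_{z|x}$-conull set; here absolute continuity of $P_{z|x}$ is crucial, because it ensures that $P_{z|x}$-a.s. equalities and Lebesgue-a.e. equalities coincide where needed, so the optimal-transport machinery behind Theorem~\ref{thm:id} applies without alteration.
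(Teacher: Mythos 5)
Your overall plan---use Lemma~\ref{lemma:abs3} and Lemma~\ref{lemma:zconv} to trade the endogenous restriction (Assumption~\ref{ass:zconv}) for the primitive conditions in Assumption~\ref{ass:supp3}---is indeed how the paper motivates Theorem~\ref{thm:id2}, but the reduction you propose (replace $V$ by $\tilde V:=V^{\zeta\zeta}$, set $\tilde{\bar U}:=p-V^{\zeta\zeta}$, and apply Theorem~\ref{thm:id} to the modified model) has a genuine gap. Theorem~\ref{thm:id} is stated under Assumption~\ref{ass:supp}, which requires the structural utility to be twice continuously differentiable, and its proof uses differentiability of $\bar U$ in the first order conditions (Steps~1 and~3). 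Your modified utility $\tilde{\bar U}=p-V^{\zeta\zeta}$ involves the endogenous price and a $\zeta$-conjugate, which are at best locally Lipschitz/semiconvex; nothing guarantees it is $C^2$, so the claim that ``Assumptions~\ref{ass:ec}, \ref{ass:uh}, \ref{ass:supp} and \ref{ass:twist} are all preserved'' fails precisely at Assumption~\ref{ass:supp}, and Theorem~\ref{thm:id} cannot simply be invoked for $(\tilde{\bar U},p)$. (Even the assertion that $(\tilde{\bar U},p)$ satisfies Assumption~\ref{ass:ec} with the same $\gamma$ is only asserted, not checked; it can be verified using $V\geq V^{\zeta\zeta}$, $V^{\zeta\zeta\zeta}=V^{\zeta}$ and Lemma~\ref{lemma:zconv}, but it is not automatic from ``$\tilde V=V$ on a full-measure set.'')

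Second, even granting a version of Theorem~\ref{thm:id} for $\tilde V$, its conclusion concerns $\nabla_z\tilde{\bar U}$, and passing back to $\nabla_z\bar U$ requires $\nabla_zV=\nabla_zV^{\zeta\zeta}$, $P_{z\vert x}$-a.e.; equality of the two functions $P_{z\vert x}$-a.e.\ does not by itself yield equality of their gradients. Making this step rigorous is exactly the content of the paper's ``Step 2 alternative'' in the proof of Theorems~\ref{thm:id} and~\ref{thm:id2}: $V^{\zeta\zeta}$ is locally Lipschitz (Lemma~C.1 of Gangbo--McCann), hence differentiable Lebesgue-a.e.\ by Rademacher's theorem, and a Lebesgue density-point argument on the set $S=\{V=V^{\zeta\zeta}\}$ (which has full $P_{z\vert x}$ measure, with $P_{z\vert x}$ absolutely continuous by Lemma~\ref{lemma:abs3}) shows that $V$ is \emph{approximately} differentiable $P_{z\vert x}$-a.e.\ with $\nabla_{ap,z}V=\nabla_zV^{\zeta\zeta}$; this approximate gradient is then fed into the first order condition in Step~3. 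Your proposal treats this as ``bookkeeping,'' but it is the main technical step that absolute continuity is there to enable; without it (or an equivalent argument), the identification of $\nabla_z\bar U$ in part~(1) and the a.e.\ characterization of $V$ in part~(2) are not established.
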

Theorem~\ref{thm:id2}(1) provides identification of marginal utility without any restriction on the distributions of observable characteristics of producers and consumers. The latter may include discrete characteristics. Regularity conditions in Assumption~\ref{ass:supp3} are satisfied in the cases~$\zeta(x,\varepsilon,z)=\varepsilon'z$ and~$\zeta(x,\varepsilon,z)=\exp(\varepsilon'z)$ as we discuss in Sections~\ref{sec:lmu} and~\ref{sec:exp} respectively. They preclude bunching of consumers at equilibrium, as shown in Lemma~\ref{lemma:abs3}. The result is driven by the shape restriction~\ref{ass:twist} and the strong normalization assumption on the distribution of unobserved heterogeneity. This assumption is inevitable in a single market identification based on a generalized quantile identification strategy. Section~\ref{sec:mm} discusses (partial) identification without knowledge a priori of the distribution of unobserved heterogeneity. Theorem~\ref{thm:id2}(2) provides a framework for estimation and inference on the identified marginal utility based on new developments in computational optimal transport (see for instance the survey in \citet{PC:2019}).

\subsection{Special cases}

\subsubsection{Marginal utility linear in unobserved taste}
\label{sec:lmu}

A leading special case of the identification result in Theorem~\ref{thm:id2} is the choice $\zeta(x,\varepsilon,z)=z^{\prime}\varepsilon$, where marginal utility is linear in unobservable taste.
A natural interpretation of this specification is that each quality dimension~$z_j$ is associated with a specific unobserved taste intensity $\varepsilon_j$ for that particular quality dimension.
Assumptions~\ref{ass:twist} and~\ref{ass:supp3}(2,3) are automatically satisfied when~$\zeta(x,\varepsilon,z)=\varepsilon'z$. In addition, $\zeta$-convexity reduces to traditional convexity, so that we have the following corollary
of Theorem~\ref{thm:id2}:

\begin{corollary}
\label{cor:id2}
Under~\ref{ass:ec}, ~\ref{ass:uh'} with~$\zeta(x,\varepsilon,z)=\varepsilon'z$, and~\ref{ass:supp3}(1), the following hold:
\begin{enumerate}
\item $\bar U(x,z)$ is
nonparametrically identified, in the sense that $z\mapsto \nabla_z\bar U(x,z)$ is the only marginal utility function compatible with the
pair $(P_{xz},p)$,
i.e., any other marginal utility function coincides with it, $P_{z\vert x}$ almost surely.
\item For all $x\in X$, $\bar U(x,z)=p(z)-V(x,z)$ and $z\mapsto V(x,z)$ is $P_{z\vert x}$ almost everywhere equal to the convex solution to the problem $\min_V\left(\mathbb{E}_z[V(x,z)\vert x]+\mathbb{E}_\varepsilon%
[V^\ast(x,\varepsilon)\vert x]\right)$, where $V^\ast$ is the convex conjugate of~$V$.
\end{enumerate}
\end{corollary}
A significant computational advantage of Corollary~\ref{cor:id2}(2) over the general case is that the potential solves a convex program. The first order condition~(\ref{eq:FOC}) also simplifies to
$\varepsilon(x,z)=\nabla_{ap,z}\;V(x,z)$, where~$\nabla_{ap,z}$ denotes the approximate gradient with respect to~$z$ (Definition~6 of the technical appendix \citet{CGHP:2020app}). Hence, inverse demand in this case is the approximate gradient of a~$P_{z\vert x}$ almost surely convex function. This can be interpreted as a multivariate version of monotonicity of inverse demand and assortative matching, since in the univariate case, we recover monotonicity of inverse demand as in Section~\ref{sec:univariate}.

\subsubsection{Nonlinear transforms and random Barten scales}
\label{sec:exp}

In the special case of Section~\ref{sec:lmu}, the curvature of marginal willingness to pay for qualities varies only with observable characteristics, but is independent of unobservable type~$\varepsilon$. Two ways Specification~$\zeta(z,\varepsilon,z)=z'\varepsilon$ can be generalized to allow the curvature of marginal utility to vary with unobserved type are the following.
\begin{enumerate}

\item Specification~$\zeta(x,\varepsilon,z)=F(x,z'\varepsilon)$ satisfies Assumptions~\ref{ass:twist} and~\ref{ass:supp3} when the conditional distribution~$P_{\varepsilon\vert x}$ of types has bounded support,~$z'\varepsilon>0,$ and~$F$ is increasing and convex in its second argument (with non zero derivative). A special case is~$\zeta(x,\varepsilon,z):=\exp(z'\varepsilon)$. Under this specification, a type~$\varepsilon$ consumer's marginal willingness to pay for quality~$z$ is~$\nabla_z\bar U(x,z)+\exp(z'\varepsilon)\varepsilon$. The curvature of marginal willingness to pay increases with type, so that there are increasing returns to quality.

\item Specification~$\zeta(x,\varepsilon,z)=\sum_{k=1}^{d_z}F_k(x,\varepsilon_k,z_k),$ satisfies Assumption~\ref{ass:twist} when~$F_k$ is supermodular in~$(\varepsilon_k,z_k)$  for each~$k$ and~$x$, and satifies Assumption~\ref{ass:supp3} when~$\partial^2F_k/\partial z_k^2$ is bounded above and~$\partial^2F_k/\partial\varepsilon_k\partial z_k$ is bounded below for each~$k$, uniformely over~$x$. A special case is~$\zeta(x,\varepsilon,z)=\sum_{k=1}^{d}F_k(x,z_k\varepsilon_k),$ in the spirit of random Barten scales, as in \citet{LP:2017}.

\end{enumerate}

\subsection{Extensions}

\subsubsection{Lower dimensional unobserved heterogeneity}
\label{sec:ld}

Our main identification result, Theorem~\ref{thm:id2}, is obtained under conditions that force the dimension of unobserved heterogeneity
to be the same as the dimension~$d_z$ of the good quality space. The conditions that impose~$d_\varepsilon=d_z$ are Assumption~\ref{ass:uh'} which requires the distribution~$P_{\varepsilon\vert x}$ to be absolutely continuous with respect to Lebesgue measure on~$\mathbb R^{d_z}$, and Assumption~\ref{ass:twist}(B), which requires injectivity of~$z\mapsto\nabla_\varepsilon\zeta(x,\varepsilon,z)$. In the special case~$\zeta(x,\varepsilon,z)=\varepsilon'z$, the interpretation of each dimension of~$\varepsilon$ as a quality dimension specific taste is appealing. However, the choice of dimension of unobserved heterogeneity remains an arbirary modelling choice.

In this section, we relax these assumptions and analyze identification with unobserved heterogeneity of lower dimension, including~$d_\varepsilon=0$ and~$d_\varepsilon=1$. First recall that inverse demand is identified in Lemma~\ref{lemma:idid} under Assumptions~\ref{ass:ec}, \ref{ass:uh}, \ref{ass:twist}(A) and~\ref{ass:supp}, which only require~$d_\varepsilon\leq d_z$. We also know from Lemma~\ref{lemma:zconv} that the potential~$z\mapsto V(x,z)$ is~$P_{z\vert x}$ almost everywhere~$\zeta$-convex. It is therefore~$\zeta$-convex on any open subset of the support of~$P_{z\vert x}$, which we show implies local identification. To obtain a global identification result, we need this~$\zeta$-convexity everywhere.

\begin{myassump}{S3}[$\protect\zeta$-convexity]
\label{ass:zconv} The potential $V$ is $\zeta$-convex as a function of
$z$ for all~$x$.
\end{myassump}
Unfortunately, this global constraint implies a constraint on the endogenous price function, for which we do not have sufficient conditions in the general case.
Under Assumption~\ref{ass:zconv}, we show differentiability of the potential function~$z\mapsto V(x,z)$, hence identification of marginal utility.

\begin{theorem}[Lower dimensional unobservable heterogeneity]
\label{thm:id}
\;\mbox{ }\;
\begin{enumerate}
\item Local identification: Under Assumptions~\ref{ass:ec}, ~\ref{ass:uh},~\ref{ass:supp},~\ref{ass:supp3}(2)
and~\ref{ass:twist}(A),
\begin{enumerate}
\item $\nabla_z\bar U(x,z)$ is identified on any open subset of the support of~$P_{z\vert x}$,
\item $v'\nabla_z\bar U(x,z)$ is identified for any vector~$v$ tangent to the support of~$P_{z\vert x}$.
\end{enumerate}
\item Global identification: If, in addition, Assumption~\ref{ass:zconv} holds, $\nabla_z\bar U(x,z)$ is identified.
\end{enumerate}
\end{theorem}
Local identification therefore holds under very weak assumptions, as seen in Theorem~\ref{thm:id}(1a). However, in cases with lower dimensional unobserved heterogeneity, consumer choices may be concentrated on a lower dimensional manifold, so that there are no open subsets in the support of~$P_{z\vert x}$. In such cases, Theorem~\ref{thm:id}(1b) tells us that we can only identify marginal willingness to pay along the support of good attributes actually traded at equilibrium. To illustrate the idea, suppose consumers are acquiring housing. The latter is differentiated along two dimensions, size and air quality, say. Suppose consumers with identical observable characteristics~x are heterogeneous along a single scalar dimension of unobserved heterogeneity~$\varepsilon$. We would then expect equilibrium housing choices to be concentrated on a curve in the (size~$\times$~air quality) space, implicitly defining a scalar quality index that is monotonic in unobserved type~$\varepsilon$. Our result says that we can identify counterfactual marginal willingness to pay for size and air quality along that curve of observed equilibrium choices only.
To obtain global identification with lower dimensional unobserved heterogeneity, we need a global~$\zeta$-convexity assumption on the potential (which implies a shape restriction on the endogenous price function).

We investigate special cases:
\begin{itemize}

\item The case~$\zeta(x,\varepsilon,z)=0$. From the first order condition of the consumer's program, we then have~$\nabla p(z)=\nabla_z U(x,z)$, so that~$U(x,z)$ is additively separable in~$x$ and~$z$ and the data on matching between consumers and producers cannot inform utility.

\item Scalar unobserved heterogeneity: suppose~$d_\varepsilon=1$ and in addition~$\zeta(x,\varepsilon,z)=\zeta(\varepsilon,z)$. Denote the (scalar) inverse demand~$z\mapsto\varepsilon(x,z)$ and assume regularity of all the terms involved. Differentiating the first order condition of the consumer problem with respect to consumer observable characteristic~$x$ yields
\[D^2_{xz}\bar U(x,z)=-D^2_{x\varepsilon}\zeta(\varepsilon(x,z),z)\nabla_x\varepsilon(x,z),\] which is at most of rank~$1$ since~$D^2_{x\varepsilon}\zeta(\varepsilon(x,z),z)$ is a~$d_z\times d_\varepsilon$ matrix.

\end{itemize}

\subsubsection{Partial identification with multiple markets}
\label{sec:mm}

All identification results so far, require fixing the distribution of unobserved consumer heterogeneity a priori. In this section, we derive identifying information from multiple markets, and the possibility of jointly (partially) identifying the utility function~$\bar U(x,z)$ and the distribution~$P_{\varepsilon\vert x}$.
Suppose~ $m_1$ and $m_2$ index two separate markets, in the sense that producers, consumers or goods cannot move between markets. Markets differ in the distributions of producer and consumer characteristics~$(P_x^{m_1},P_{\tilde y}^{m_1})$ and $(P_x^{m_2},P_{\tilde y}^{m_2})$. Suppose, however, that the distribution of unobserved tastes~$P_{\varepsilon\vert x}$ and the utility function~$U(x,\varepsilon,z)=\bar U(x,z)+z'\varepsilon$ is identical in both markets.
Both markets are at equilibrium. The equilibrium price schedule in market~$m$ is~$p^{m}(z)$. The equilibrium distribution of traded qualities in market~$m$ is~$P^m_{z\vert x}$.

Under the assumptions of Corollary~\ref{cor:id2}, in each market, we recover a nonparametrically identified utility function $\bar U^{m}(x,z;P_{\varepsilon\vert x})$, where the dependence in the unknown distribution of tastes $P_{\varepsilon\vert x}$ is emphasized. For each fixed~$P_{\varepsilon\vert x}$, Corollary~\ref{cor:id2} tells us that~$\nabla_z\bar U^{m}(x,z;P_{\varepsilon\vert x})$ is uniquely determined. In each market, the first order condition of the consumer's problem is~$\varepsilon(x,z;m)=\nabla_{ap,z}p^m(z)-\nabla_z\bar U(x,z;P_{\varepsilon\vert x})$. Differencing across markets therefore yields:
\begin{eqnarray}\label{eq:mm}
\varepsilon(z,x;m_1)-\varepsilon(x,z;m_2)=\nabla_{ap,z}\left( p^{m_1}(z)-p^{m_2}(z)\right),
\end{eqnarray}
which is an identifying equation for~$P_{\varepsilon\vert x}$. The right-hand side of~(\ref{eq:mm}) is identified, since the price functions are observed. Moreover, for each market~$m$,
the inverse demand~$\varepsilon(x,z;m)$ uniquely determines~$P_{\varepsilon\vert x}$, since it pushes the identified~$P^m_{z\vert x}$ forward to~$P_{\varepsilon\vert x}$. Point identification would require conditions under which the difference~$\varepsilon(z,x;m_1)-\varepsilon(x,z;m_2)$ uniquely determines~$P_{\varepsilon\vert x}$, which is beyond the scope of the present work.

\section{Identification of non separable simultaneous equations}
\label{sec:se}

The analysis of hedonic equilibrium models in Section~\ref{sec:multiple} motivates a new approach to the identification of nonseparable
simultaneous equations models of the type~$H(x,z)=\varepsilon$, where~$x\in\mathbb R^{d_x}$ is an observed vector of covariates, $z\in\mathbb R^{d_z}$ is the vector of dependent variables,~$P_{z\vert x}$ is identified from the data,~$H$ is an unknown function and~$\varepsilon\in\mathbb R^{d_\varepsilon}$ is a vector of unobservable shocks with distribution~$P_{\varepsilon\vert x}$. In the case~$d_\varepsilon=d_z=1$, $H$ is identified by \citet{Matzkin:2003} subject to the normalization of~$P_{\varepsilon\vert x}$ and monotonicity of~$z\mapsto H(x,z)$ for all~$x$. This section develops a class of shape restrictions that allows identification of~$H$ in the multivariate case~$1\leq d_\varepsilon\leq d_z$.

As in the scalar case, we fix the conditional distribution~$P_{\varepsilon|x}$ of errors a priori. This is justified by the fact that for any vector of dependent variables~$Z\sim P_{z\vert x}$ and any pair of absolutely continuous error distributions~$(P_{\varepsilon\vert x},\tilde P_{\tilde\varepsilon\vert x})$, there is an invertible mapping~$T$ such that~$H(x,Z)\sim P_{\varepsilon\vert x}$ and~$\tilde H(x,Z):=H(x,T(Z))\sim \tilde P_{\tilde\varepsilon\vert x}$ (by \citet{McCann:95}), so that~$(H,P_{\varepsilon\vert x})$ and~$(\tilde H,\tilde P_{\tilde\varepsilon\vert x})$ are observationally equivalent.

For identification, we rely on a shape restriction that emulates monotonicity in~$z$ of~$H(x,z)$ in the scalar case.
This generalized monotonicity notion is inherited from utility maximizing choices of good quality~$z$ by consumers with characteristics~$(x,\varepsilon)$.
As such, it is indexed by the utility function.
\begin{definition}[$\zeta$-monotonicity]
\label{def:zmon}
Let~$\zeta$ be a function on~$\mathbb R^{d_x}\times\mathbb R^{d_\varepsilon}\times\mathbb R^{d_z}$ that is continuously differentiable in its second and third variables and satisfies Assumption~\ref{ass:twist}(A). A function~$H$ on~$\mathbb R^{d_x}\times\mathbb R^{d_z}$ is~$\zeta$-monotone if for all~$x$, there exists a~$\zeta$-convex function~$z\mapsto V(x,z)$ (see Definition~\ref{def:zconv}) such that for all~$z$, $H(x,z)\in\partial^{\zeta}_zV(x,z)$, where~$\partial^{\zeta}_z$ denotes the~$\zeta$-subdifferential with respect to~$z$ from Definition~4 of the technical appendix \citet{CGHP:2020app}.
\end{definition}
Two special cases help clarify the concept of~$\zeta$-monotonicity:
\begin{enumerate}
\item When~$d_z=d_\varepsilon=1$, injectivity of~$\varepsilon\mapsto\zeta_z(x,\varepsilon,z)$ implies that~$V^\zeta$ is convex or concave, and that~$z\mapsto H(x,z)$ is monotone.
\item When~$d_\varepsilon=d_z$ and~$\zeta(x,\varepsilon,z)=z'\varepsilon$, $V^\zeta=V^\ast$, which is convex, and therefore locally Lipschitz, hence almost surely differentiable by Rademacher's Theorem (\citet{Villani:2009}, Theorem~10.8(ii)), so that~$H(x,z)=\nabla_zV(x,z)$ is the gradient of a convex function.
\end{enumerate}

The class of~$\zeta$-monotone functions has structural underpinnings as demand functions resulting from the maximization of a utility function over good qualities~$z\in\mathbb R^{d_z}$. Suppose a consumer with characteristics~$(x,\varepsilon)$ chooses~$z$ based on the maximization of~$\bar U(x,z)+\zeta(x,\varepsilon,z)-p(z)$. Suppose~$\zeta$ satisfies Assumption~\ref{ass:twist} and~$V(x,z):=p(z)-\bar U(x,z)$ is~$\zeta$-convex. Then, the demand function~$H$ that satisfies~$\nabla V(x,z)=\nabla_z\zeta(x,H(x,z),z)$, $P_{z\vert x}$ a.s., is~$\zeta$-monotonic by Theorem~10.28(b) page~243 of \citet{Villani:2009}.
Theorem~\ref{thm:matzkin} below then shows identification of demand when utility is of the form~$\bar U(x,z)+\zeta(x,\varepsilon,x)$ and~$\zeta$ is fixed.
\begin{theorem}[Nonseparable simultaneous equations]
\label{thm:matzkin}
In the simultaneous equations model $H(x,z)=\varepsilon$, with $z\in \mathbb R^{d_z}$ and $x\in \mathbb R^{d_x}$, and~$\varepsilon$ follows known distribution~$P_{\varepsilon\vert x}$,
the function $H:\mathbb R^{d_x}\times\mathbb R^{d_z}\rightarrow\mathbb R^{d_\varepsilon}$ is identified within the class of measurable~$\zeta$-monotone
functions, for any function~$\zeta$ on~$\mathbb R^{d_x}\times\mathbb R^{d_\varepsilon}\times\mathbb R^{d_z}$ that is bounded above, continuously differentiable in its second and third variables and satisfies Assumptions~\ref{ass:twist}(A), \ref{ass:supp3}(2) and $\int \zeta(x,\varepsilon,z)dP_{\varepsilon\vert x}(\varepsilon) \geq C\in\mathbb R$ for all~$x,z$.
\end{theorem}
Theorem~\ref{thm:matzkin} is a relatively straightforward application of classical results in optimal transport theory, in particular Theorem~10.28 page~243 of \citet{Villani:2009}. Brenier's polar factorization theorem, in \citet{Brenier:91}, was, to the best of our knowledge, first used to define multivariate quantile functions by \citet{EGH:2012} and \citet{GH:2012} with decision theoretic applications. \citet{CCG:2014} and \citet{CGHH:2017}  (both coetaneous with the present paper) apply \citet{McCann:95} to multivariate quantile regression and multivariate depth, quantiles, ranks and signs respectively. This section relies on an extension of these optimal transport results to more general transport costs and interprets it as an identification result, thus extending scalar quantile identification strategies.

If we revisit the two special cases of~$\zeta$-monotonicity above, we obtain the classical quantile identification of \citet{Matzkin:2003}, and a result on the identification of nonseparable simultaneous equations systems within the class of gradients of convex functions.
\begin{enumerate}
\item When~$d_z=d_\varepsilon=1$, Theorem~\ref{thm:matzkin} yields the identification of~$H$ in the system~$\varepsilon=H(x,z)$ when~$z\mapsto H(x,z)$ is monotone.
\item When~$d_\varepsilon=d_z$ and~$\zeta(x,\varepsilon,z)=z'\varepsilon$, Theorem~\ref{thm:matzkin} yields the identification of~$H$ in the system~$\varepsilon=H(x,z)$ when~$z\mapsto H(x,z)$ is the gradient of a convex function.
\end{enumerate}

\begin{corollary}
\label{cor:McCann}
In the simultaneous equations model $z=G(x,\varepsilon)$, with $z\in \mathbb R^{d_z}$, $x\in \mathbb R^{d_x}$, $\varepsilon\sim P_{\varepsilon\vert x}$, and~$P_{\varepsilon\vert x}$ a given absolutely continuous distribution on~$\mathbb R^{d_z}$,
the function $G:\mathbb R^{d_x}\times\mathbb R^{d_z}\rightarrow\mathbb R^{d_z}$ is identified within the class of gradients of convex functions of~$\varepsilon$, for each~$x$.
\end{corollary}
Although the previous result is presented as a corollary of Theorem~\ref{thm:matzkin}, it holds under weaker conditions than would be implied by Theorem~\ref{thm:matzkin} in case~$\zeta(x,\varepsilon,z)=z'\varepsilon$ and is a direct application of the {\em Main Theorem}~ in \citet{McCann:95}. The only constraint is the absolute continuity of the distribution of~$\varepsilon$, so that the outcome vector~$z$ and the covariate vector~$x$ are unrestricted.

Beyond the special cases of Corollary~\ref{cor:McCann}, we revisit the examples of Section~\ref{sec:exp}. First consider the case of exponential transform~$\zeta(x,\varepsilon,z):=\exp(z'\varepsilon)$. Given marginal utility~$\bar U(x,z)+\exp(z'\varepsilon)$ and price~$p(z)$, Theorem~\ref{thm:matzkin} tells us that the solution~$\varepsilon=H(x,z)$ to the consumer's problem~$\nabla p(z)-\nabla_z\bar U(x,z)=\exp(z'H(x,z))H(x,z)$ is unique. In the case consumers are maximizing utility of the form~$\bar U(x,z)+\sum_{k=1}^dF_k(x,z_k\varepsilon_k)$ , the corresponding system of differential equations with a unique solution is~$p_{z_k}(z)-\bar U_{z_k}(x,z)=f_k(x,z_kH_k(x,z))H_k(x,z)$, each~$k$, where~$p_{z_k}$ and~$\bar U_{z_k}$ are the partial derivatives with respect to the~$k$-th variable,~$f_k$ is the derivative of~$F_k$ with respect to the second argument, and~$H_k$ is the~$k$-th component of~$H$.

\section*{Discussion}

This paper proposed a set of conditions under which utilities and costs in a hedonic equilibrium model are identified from the observation of a single market outcome.
The proof strategy extends \citet{EHN:2004} and \citet{HMN:2010} (hereafter EHMN) to the case of goods characterized by more than one attribute. The proposed shape restriction on the utility function, called twist condition, extends the single crossing condition in EHMN. The proof of identification mirrors that of (one of the strategies in) EHMN. First, inverse demand is identified from the twist condition and cyclical monotonicity (a feature of equilibrium). Then the first order condition of the consumer's problem allows the recovery of the utility function, once a suitable form of weak differentiability of the endogenous price function is ensured. The identification proof highlights another parallel with EHMN, which is (generalized) monotonicity of inverse demand. In the scalar case, this generalized monotonicity reduces to monotonicity, whereas in the special case, where utility takes the form~$U((x,\varepsilon),z)=\bar U(x,z)+z'\varepsilon$, inverse demand is the gradient of a convex function. We then show that this generalized form of monotonicity is a suitable shape restriction to identify nonseparable simultaneous equations models with a strategy that extends the quantile identification of \citet{Matzkin:2003}.
Most of our results involve fixing the distribution of unobserved consumer heterogeneity a priori, as in the original quantile identification method. Although we provide some discussion of the case, where data from multiple distinct markets can provide additional identifying equations to (partially) identifying $(\bar U(x,z),P_{\varepsilon\vert x})$ jointly, more research is needed to develop point identification conditions for the latter.


\appendix

{\footnotesize

\section{Notation}

Throughout the paper, we use the following notational conventions. Let $f(x,y)$ be a real-valued function on $\mathbb R^d\times \mathbb R^d$. When $f$ is sufficiently smooth, the gradient of~$f$ with respect to~$x$ is denoted~$\nabla_xf$, the matrix of second order derivatives with respect to~$x$ and~$y$ is denoted~$D^2_{xy}f$. When~$f$ is not smooth,~$\partial_xf$ refers to the subdifferential with respect to~$x$, from Definition~2 of the technical appendix \citet{CGHP:2020app}, and~$\nabla_{ap,x}f$ refers to the approximate gradient with respect to~$x$, from Definition~6 of the technical appendix \citet{CGHP:2020app}. The set of all Borel probability distributions on a set~$Z$ is denoted~$\Delta(Z)$. A random vector~$\varepsilon$ with probability distribution~$P$ is denoted~$\varepsilon\sim P$, and~$X\sim Y$ means that the random vectors~$X$ and~$Y$ have the same distribution. The product of two probability distributions~$\mu$ and~$\nu$ is denoted~$\mu\otimes\nu$ and for a map~$f:X\mapsto Y$ and~$\mu\in\Delta(X)$,~$\nu:=f\#\mu$ is the probability distribution on~$Y$ defined for each Borel subset~$A$ of~$Y$ by~$\nu(A)=\mu( f^{-1}(A))$. For instance, if~$T$ is a map from~$X$ to~$Y$ and~$\nu$ a probability distribution on~$X$, then~$\mu:=(id,T)\#\nu$ defines the probability distribution on~$X\times Y$ by~$\mu(A)=\int_X1_A(x,T(x))d\mu(x)$ for any measurable subset~$A$ of~$X\times Y$. Given two probability distributions~$\mu$ and~$\nu$ on~$X$ and~$Y$ respectively,~$\mathcal M(\mu,\nu)$ will denote the subset of~$\Delta(X\times Y)$ containing all probability distributions with marginals~$\mu$ and~$\nu$. We denote the inner product of two vectors~$x$ and~$y$ by~$x^\prime y$. The Euclidean norm is denoted~$\|\cdot\|$. The notation~$|a|$ refers to the absolute value of the real number~$a$, whereas~$|A|$ refers to the Lebesgue measure of set~$A$. The set of all continuous real valued functions on~$Z$ is denoted~$C^0(Z)$ and~$B_r(x)$ is the open ball of radius~$r$ centered at~$x$. For each fixed $x\in X$, the function $\varepsilon\mapsto V^\ast(x,\varepsilon):=\sup_{z\in\mathbb R^{d}}\{z'\varepsilon-V(x,\varepsilon)\}$ is called the {\em convex conjugate} (also known as {\em Legendre-Fenchel transform})
of $z\mapsto V(x,z)$. Still for fixed $x$, the convex conjugate $z\mapsto V^{\ast\ast}(x,\varepsilon):=\sup_{\varepsilon\in\mathbb R^{d}}\{z'\varepsilon-V^\ast(x,\varepsilon)\}$ is called the {\em double conjugate} or {\em convex envelope} of the potential function $z\mapsto V(x,z)$. According to convex duality theory (see for instance \citet{Rockafellar:70}, Theorem~12.2 page 104), the double conjugate of $V$ is $V$ itself if and only if $z\mapsto V(x,z)$ is convex and lower-semi-continuous. The notation~$\psi^\zeta$ refers to the~$\zeta$-convex conjugate of the function~$\psi$ (Definition~3 of the technical appendix \citet{CGHP:2020app}) and~$\partial^\zeta_x$ refers to the~$\zeta$-subdifferential with respect to~$x$ (Definition~4 of the technical appendix \citet{CGHP:2020app}).

}

{\footnotesize

\section{Proof of results in the main text}\label{app:proofs}

\subsubsection*{{\protect\footnotesize Proof of Lemma~\protect\ref%
{lemma:zconv}}}

{\footnotesize
By definition of $V^\zeta$, we have \begin{equation}\label{dualinequality}
V(x,z) \geq \zeta(x,\varepsilon,z) -V^{\zeta}(x,\varepsilon).
\end{equation}
As, by definition of $\zeta$-conjugation,
$V^{\zeta\zeta}(x,z) =\sup_{\varepsilon} \left[ \zeta(x,\varepsilon,z) -V^{\zeta}(x,\varepsilon) \right],$
we have
\begin{equation}
\label{firstinequality}
V(x,z) \geq V^{\zeta\zeta}(x,z),
\end{equation}
by taking supremum over $\varepsilon$ in \eqref{dualinequality}.

Let $\gamma$ be an hedonic equilibrium probability distribution on $\tilde X\times Z\times\tilde Y$. By Assumption~\ref{ass:ec},
\[
\zeta(x,\varepsilon,z)-V(x,z)=U(x,\varepsilon,z)-p(z)=\max_{z\in Z}\left(U(x,\varepsilon,z)-p(z)\right)=\max_{z\in Z}\left(\zeta(x,\varepsilon,z)-V(x,z)\right)=V^\zeta(x,\varepsilon)
\]
is true~$\gamma$ almost everywhere. Hence, there is equality in (\ref{dualinequality}) $\gamma$ almost everywhere.
Hence, for $P_{z|x}$ almost every $z$, and $\varepsilon$ such that $(z, \varepsilon)$ is in the support of $\gamma$, we have~$V(x,z) = \zeta(x,\varepsilon,z) -V^{\zeta}(x,\varepsilon).$
But the right hand side is bounded above by $V^{\zeta\zeta}(x,z)$ by definition, so we get
$V(x,z ) \leq V^{\zeta\zeta}(x,z).$
Combined with \eqref{firstinequality}, this tells us $V(x,z) =V^{\zeta\zeta}(x,z)$, $P_{z|x}$ almost everywhere.
\hfill%
$\square$ }

\subsubsection*{Proof of Lemma~\ref{lemma:idid}}

\footnotesize{

For a fixed observable type $x$, assume that the types $\tilde x_0:=(x,\varepsilon_0)$ and $\tilde x_1:=(x,\varepsilon_1)$ both choose the same good, $\bar z \in Z$, from producers $\tilde y_0$ and $\tilde y_1$, respectively.

  We want to prove that this implies the unobservable types are also the same; that is, that $\varepsilon_0=\varepsilon_1$.  This property is equivalent to having a map from the good qualities $Z$ to the unobservable types, for each fixed observable type.

Note that $\bar z$ must maximize the joint surplus for both $\varepsilon_0$ and $\varepsilon_1$.  That is, setting
\begin{equation*}\label{maxsurplus}
S(x,\varepsilon,\tilde y) = \sup_{z\in Z}[\bar U(x, z) +\zeta(x, \varepsilon, z) -C(\tilde y, z)]
\end{equation*}
we have,
\begin{equation}\label{surplusmax_0}
S(x,\varepsilon_0,\tilde y_0)= \bar U(x,\bar z) +\zeta(x, \varepsilon_0,\bar  z) -C(\tilde y_0,\bar z)
\end{equation}
 and
\begin{equation}\label{surplusmax_1}
S(x,\varepsilon_1,\tilde y_1)= \bar U(x,\bar z) +\zeta(x, \varepsilon_1,\bar  z) -C(\tilde y_1,\bar z).
\end{equation}

By Assumption~\ref{ass:ec}, we can apply Lemma~1 of \citet{CMN:2010}, so that the pair of indirect utilities $(V,W)$, where
$V(\tilde x)=\sup_{z\in Z}\left( U(\tilde x,z)-p(z)\right)$ and~$W(\tilde y)=\sup_{z\in Z}\left( p(z)-C(\tilde y,z)\right)$,
achieve the dual (DK) of the optimal transportation problem
\begin{eqnarray*}
\sup_{\pi\in\mathcal M\left(P_{\tilde x},P_{\tilde y}\right)}\int S(\tilde x,\tilde y)d\pi(\tilde x,\tilde y),
\end{eqnarray*}
with solution $\pi$.
This implies, from Theorem~1.3 page 19 of \citet{Villani:2003}, that for $\pi$ almost all pairs $(\tilde x_0,\tilde y_0)$ and $(\tilde x_1,\tilde y_1)$,
\begin{eqnarray*}
V(\tilde x_0)+W(\tilde y_0)&=&S(\tilde x_0,\tilde y_0),\\
V(\tilde x_1)+W(\tilde y_1)&=&S(\tilde x_1,\tilde y_1),\\
V(\tilde x_0)+W(\tilde y_1)&\geq&S(\tilde x_0,\tilde y_1),\\
V(\tilde x_1)+W(\tilde y_0)&\geq&S(\tilde x_1,\tilde y_0).
\end{eqnarray*}
  We therefore deduce the condition (called the $2$-monotonicity condition):
$$
S(x,\varepsilon_0,\tilde y_0)+S(x,\varepsilon_1,\tilde y_1) \geq S(x,\varepsilon_1,\tilde y_0)+S(x,\varepsilon_0,\tilde y_1),
$$  recalling that $\tilde x_0=(x,\varepsilon_0)$ and $\tilde x_1=(x,\varepsilon_1)$.
 Now, by definition of $S$ as the maximized surplus, we have
\begin{equation}\label{surplusineq0}
S(x,\varepsilon_1,\tilde y_0) \geq \bar U(x, \bar z) +\zeta(x, \varepsilon_1, \bar z) -C(\tilde y_0, \bar z)
\end{equation}
 and
\begin{equation}\label{surplusineq1}
S(x,\varepsilon_0,\tilde y_1) \geq \bar U(x,\bar  z) +\zeta(x, \varepsilon_0, \bar z) -C(\tilde y_1,\bar z).
\end{equation}
 Inserting this, as well as \eqref{surplusmax_0} and \eqref{surplusmax_1} into the $2$-monotonicity inequality yields
\begin{eqnarray*}
 \bar U(x,\bar z) +\zeta(x, \varepsilon_0,\bar  z) -C(\tilde y_0,\bar z)&+&
\bar U(x,\bar z) +\zeta(x, \varepsilon_1,\bar  z) -C(\tilde y_1,\bar z)\\
 &\geq&  S(x,\varepsilon_1,\tilde y_0)+S(x,\varepsilon_0,\tilde y_1) \\
&\geq&  \bar U(x, \bar z) +\zeta(x, \varepsilon_1, \bar z) -C(\tilde y_0,\bar z)+\bar U(x,\bar z) +\zeta(x, \varepsilon_0, \bar z) -C(\tilde y_1, \bar z).
\end{eqnarray*}
But the left and right hand sides of the preceding string of inequalities are identical, so we must have equality throughout.  In particular, we must have equality in \eqref{surplusineq0} and \eqref{surplusineq1}.  Equality in  \eqref{surplusineq0}, for example, means that $\bar z$ maximizes $z \mapsto \bar U(x, z) +\zeta(x, \varepsilon_1, z) -C(\tilde y_0, z)$, and so, as $\bar z$ is in the interior of $Z$ by Assumption~\ref{ass:ec},  we have
\begin{equation}\label{zderivative1}
 \nabla _z\zeta(x, \varepsilon_1, \bar z) =\nabla _zC(\tilde y_0, \bar z)-\nabla _z\bar U(x, \bar z) .
\end{equation}
Since $\bar z$ also maximizes $z \mapsto \bar U(x, z) +\zeta(x, \varepsilon_0, z) -C(\tilde y_0, z)$, we also have
\begin{equation}\label{zderivative0}
 \nabla _z\zeta(x, \varepsilon_0, \bar z) =\nabla _zC(\tilde y_0, \bar z)-\nabla _z\bar U(x, \bar z).
\end{equation}
Equations \eqref{zderivative1} and \eqref{zderivative0} then imply
$
\nabla _z\zeta(x, \varepsilon_1, \bar z) =\nabla _z\zeta(x, \varepsilon_0, \bar z)
$
and Assumption~\ref{ass:twist}(A) implies $ \varepsilon_1 = \varepsilon_0$.

\hfill%
$\square$
}

\subsubsection*{{\protect\footnotesize Proof of Theorem~\ref{thm:id2}}}

{\footnotesize

(1) Since by Lemma~4 in the technical appendix \citet{CGHP:2020app}, $V(x,z)$ is approximately differentiable~$P_{z\vert x}$ almost surely, and since $\bar U(x,z)$ is differentiable by assumption, $p(z)=V(x,z)+\bar U(x,z)$ is also approximately differentiable~$P_{z\vert x}$ almost surely. Since, by Lemma~\ref{lemma:idid}, the inverse demand function $\varepsilon(x,z)$ is uniquely determined,
the first order condition $\nabla_z\zeta(x,\varepsilon(x,z),z)=\nabla_{ap} p(z)-\nabla_z\bar U(x,z)$ identifies
$\nabla_z\bar U(x,z)$, $P_{z\vert x}$ almost everywhere, as required. }

(2) In Part (1), we have shown uniqueness (up to location) of the pair $(V,V^\zeta)$ such that~$V(x,z)+V^\zeta(x,\varepsilon)=\zeta(x,\varepsilon,z)$, $\pi$ almost surely. By Theorem~1.3 page 19 of \citet{Villani:2003}
, this implies that~$(V,V^\zeta)$ is the unique (up to location) pair of $\zeta$-conjugates that solves the dual Kantorovitch problem as required.
\hfill%
$\square$
}

\subsubsection*{\footnotesize Proof of Theorem~\ref{thm:id}(1a) and Theorem~\ref{thm:id}(2)}

{\footnotesize

\underline{Step 1}: Differentiability of $V$ in $z$.
The objective is to prove that the subdifferential (see Definition~2 of the technical appendix \citet{CGHP:2020app}) at each $z_0$ is a singleton, which is equivalent to differentiability at~$z_0$.
We show Theorem~\ref{thm:id}(2). The same method of proof applies on any open subset of the support of~$P_{z\vert x}$ to yield the proof of Theorem~\ref{thm:id}(1a). From Assumption~\ref{ass:zconv}, $V(x,z)$ is $\zeta$-convex, and hence locally semiconvex (see Definition~5 of the technical appendix \citet{CGHP:2020app}), by Proposition C.2 in \citet{GM:96}. We recall the definition of local semiconvexity from the latter paper.
Now, Lemma~\ref{lemma:idid} shows that for each fixed $z$, the set
\begin{equation}\label{equalityset}
\{\varepsilon\in\mathbb R^{d_z}: V(x,z) +V^{\zeta}(x,\varepsilon) = \zeta(x,\varepsilon,z)\}:=\{f(z)\}
\end{equation}
is a singleton.  We claim that this means $V$ is differentiable with respect to $z$ everywhere.  Fix a point~$z_0$. We will prove that the subdifferential $\partial_z V(x,z_0)$ contains only one extremal point (for a definition, see \citet{Rockafellar:70}, Section~18). This will yield the desired result. Indeed, all subdifferentials are closed and convex. Hence, so is the subdifferential of $V$. By Assumption~\ref{ass:zconv},~$V$ is~$\zeta$-convex, hence continuous, by the combination of Propositions~C.2 and~C.6(i) in \citet{GM:96}. Hence, as the subdifferential of a continuous function, the subdifferential of~$V$ is also bounded. Hence, it is equal to the convex hull of its extreme points (see \citet{Rockafellar:70}, Theorem~18.5). The subdifferential of $V$ at $z_0$ must therefore be a singleton, and $V$ must be differentiable at $z_0$ (Theorem~25.1 in \citet{Rockafellar:70} can be easily extended to locally semiconvex functions).
Let $q$ be any extremal point in $\partial_z V(x,z_0)$.  Let $z_n$ be a sequence satisfying the conclusion in Lemma~3 in the technical appendix \citet{CGHP:2020app}.  Now, as~$V$ is differentiable at each point $z_n$, we have the envelope condition
\begin{equation}\label{envelope}
\nabla_z V(x,z_n) = \nabla_z \zeta (x,\varepsilon_n,z_n)
\end{equation}
where $\varepsilon_n=f(z_n)$ is the unique point giving equality in \eqref{equalityset}.

As the sequence $\nabla_z\zeta(x,\varepsilon_n,z_n)$ converges, Assumption~\ref{ass:supp3}(2) implies that
the $\varepsilon_n$ remain in a bounded set.
We can therefore pass to a convergent subsequence $\varepsilon_n \rightarrow \varepsilon_0$.
By continuity of $\nabla_z\zeta$, we can pass to the limit in  \eqref{envelope} and, recalling that the left hand side tends to $q$, we obtain
$q=\nabla_z\zeta (x,\varepsilon_0,z_0).$
Now, by definition of $\varepsilon_n$, we have the equality $V(x,z_n) +V^{\zeta}(x,\varepsilon_n) = \zeta(x,\varepsilon_n,z_n).$
By Assumption~\ref{ass:zconv}, $V$ and $V^\zeta$ are $\zeta$-convex, hence continuous, by the combination of Propositions~C.2 and~C.6(i) in \citet{GM:96}. Hence, we can pass to the limit to obtain
$V(x,z_0) +V^{\zeta}(x,\varepsilon_0) = \zeta(x,\varepsilon_0,z_0).$
But this means~$\varepsilon_0 =f(z_0)$, and so $q=\nabla_z\zeta (x,\varepsilon_0,z_0) = \nabla_z\zeta (x,f(z_0),z_0) $ is uniquely determined by $z_0$.  This means that the subdifferential can only have one extremal point, completing the proof of differentiability of~$V$.

\underline{Step 2}: Since by Step 1, $V(x,z)$ is differentiable~$P_{z\vert x}$ almost surely, and since $\bar U(x,z)$ is differentiable by assumption, $p(z)=V(x,z)+\bar U(x,z)$ is also differentiable~$P_{z\vert x}$ almost surely. Since, by Lemma~\ref{lemma:idid}, the inverse demand function $\varepsilon(x,z)$ is uniquely determined,
the first order condition $\nabla_z\zeta(x,\varepsilon(x,z),z)=\nabla p(z)-\nabla_z\bar U(x,z)$ identifies
$\nabla_z\bar U(x,z)$, $P_{z\vert x}$ almost everywhere, as required%
.\hfill$\square$ }

\subsubsection*{{\protect\footnotesize Proof of Theorem~\protect\ref{thm:id}(1b)}}
{\footnotesize
The argument in the proof of Theorem~\ref{thm:id}(2) applies to~$V^{\zeta\zeta}(x,z)$; this function is therefore differentiable throughout the support of~$P_{z|x}$.
Now, if~$v$ is tangent to the support of~$P_{z|x}$ at $z_0$, there is a curve~$z(t)$ in~$P_{z|x}$  which is differentiable at~$t_0$, where~$t_0$ is such that~$z_0=z(t_0)$, and~$z'(t_0)=v$.  Since $V^{\zeta\zeta}(x,z) =V(x,z)$ on the support, we have $V^{\zeta\zeta}(x,z(t)) =V(x,z(t))$.  Since the left hand side is differentiable as a function of~$t$ at~$t_0$, the right hand side must be as well, and:
$$
\nabla_zV^{\zeta\zeta}(x,z(t))\cdot v=\nabla_zV^{\zeta\zeta}(x,z(t))\cdot z'(t_0) =\frac{\partial }{\partial t}[V(x,z(t))]|_{t=t_0}.
$$
Therefore, $p(z) =V(x,z)+\bar U(x,z)$ is also differentiable along this curve, and  $$\frac{\partial }{\partial t}[p(z(t))]|_{t=t_0} =\frac{\partial }{\partial t}[V(x,z(t))]|_{t=t_0}+\nabla_z\bar U(x,z(t_0))\cdot v.$$  It then follows from the first order condition that
$$
\nabla_{z}\zeta(x,\epsilon(x,z),z)\cdot v =\nabla_zV^{\zeta\zeta}(x,z(t))\cdot v= \frac{\partial }{\partial t}[V(x,z(t))]|_{t=t_0}=\frac{\partial }{\partial t}[p(z(t))]|_{t=t_0}-\nabla_z\bar U(x,z(t_0))\cdot v,
$$
which identifies the direction derivative $\nabla_z\bar U(x,z(t_0))\cdot v$.
\hfill%
$\square$
}

\subsubsection*{{\protect\footnotesize Proof of Theorem~\protect\ref%
{thm:matzkin}}}

{\footnotesize
Fix~$x\in\mathbb R^{d_x}$ and omit from the notation throughout the proof. Fix a twice continuously differentiable function~$\zeta$ on~$\mathbb R^{d_\varepsilon}\times\mathbb R^{d_z}$ that satisfies Assumption~\ref{ass:twist}(A). Assume there exist two~$\zeta$-monotonic functions~$H$ and~$\tilde H$ such that $H\# P_z=P_\varepsilon$. By the definition of~$\zeta$-monotonicity (Definition~\ref{def:zmon}), there exist two~$\zeta$-convex functions~$\psi$ and~$\tilde\psi$ such that~$H\in\partial^\zeta\psi$ and~$\tilde H\in\partial^\zeta\tilde\psi$. By definition of
the~$\zeta$-subdifferential (Definition~4 of the technical appendix \citet{CGHP:2020app}), it implies that~$P_z$ almost surely,
$\psi(z)+\psi^\zeta(H(z))=\zeta(H(z),z)$ and~$\tilde\psi(z)+\tilde\psi^\zeta(H(z))=\zeta(\tilde H(z),z)$.
Hence, both~$H$ and~$\tilde H$ are soutions to the Monge optimal transport problem with cost~$\zeta$, which is unique by Theorem~10.28 page 243 of \citet{Villani:2009}. It remains to show that the assumptions of Theorem~10.28 page 243 of \citet{Villani:2009} are satisfied. Indeed, by Assumption~\ref{ass:supp}, $\zeta$ is differentiable everywhere, hence superdifferentiable, verifying~(i); (ii) is Assumption~\ref{ass:twist} and~(iii) is satisfied under Assumption~\ref{ass:supp3}(2) by Step 1 of the proof of Theorem~\ref{thm:id}(2). Finally, integrating $
  \int \zeta(x,\epsilon,z)dP_{\epsilon|x}(\epsilon)> C$ over~$P_{z\vert x}$ with~$\zeta$ bounded above yields~$\int\!\!\!\int \zeta(x,\epsilon,z)dP_{z|x}(z)dP_{\epsilon|x}(\epsilon)$ finite, which is the remaining condition for Theorem~10.28 page 243 of \citet{Villani:2009} to hold.  The result follows.
\hfill$\square$ }

\section{Online Supplement}

\subsection{Hedonic equilibrium}

\label{sec:model} We consider a competitive environment,
where consumers and producers trade a good or contract, fully characterized
by its type or quality $z$. The set of feasible qualities $Z\subseteq
\mathbb{R}^{d_z}$ is assumed compact and given a priori, but the distribution of the qualities actually traded arises
endogenously in the hedonic market equilibrium, as does their price schedule
$p(z)$. Producers are characterized by their type $\tilde y\in \tilde
Y\subseteq \mathbb{R}^{d_{\tilde y}}$ and consumers by their type $\tilde
x\in \tilde X\subseteq \mathbb{R}^{d_{\tilde x}}$. Type distributions $%
P_{\tilde x}$ on $\tilde X$ and $P_{\tilde y}$ on $\tilde Y$ are given
exogenously, so that entry and exit are not modelled. Consumers and producers
are price takers and maximize quasi-linear utility $U(\tilde x,z)-p(z)$ and
profit $p(z)-C(\tilde y,z)$ respectively. Utility $U(\tilde x,z)$
(respectively cost $C(\tilde y,z)$) is upper (respectively lower)
semicontinuous and bounded
. In addition, the set of qualities $Z(\tilde x,\tilde y)$
that maximize the joint surplus $U(\tilde x,z)-C(\tilde y,z)$ for each pair of
types $(\tilde x,\tilde y)$ is assumed to have a measurable selection. Then, %
\citet{Ekeland:2010} and \citet{CMN:2010} show that an equilibrium
exists in this market, in the form of a price function $p$ on $Z$, a joint
distribution $P_{\tilde xz}$ on $\tilde X\times Z$ and $P_{\tilde yz}$ on $%
\tilde Y\times Z$ such that their marginal on $Z$ coincide, so that market
clears for each traded quality $z\in Z$. Uniqueness is not guaranteed, in
particular prices are not uniquely defined for non traded qualities in
equilibrium. Purity is not guaranteed either: an equilibrium specifies a
conditional distribution $P_{z|\tilde x}$ (respectively $P_{z|\tilde y}$) of
qualities consumed by type $\tilde x$ consumers (respectively produced by
type $\tilde y$ producers). The quality traded by a given producer-consumer
pair $(\tilde x,\tilde y)$ is not uniquely determined at equilibrium without additional assumptions.

The solution concept we impose is the following feature of hedonic equilibrium, i.e., maximization of surplus generated by a trade.
\begin{myassump}{EC}[Equilibrium concept]
\label{ass:ec}
The joint distribution $\gamma$ of $(\tilde X,Z,\tilde Y)$ and the price function $p$ form an hedonic equilibrium, i.e., they satisfy the following.
The joint distribution $\gamma$ has marginals~$P_{\tilde x}$ and~$P_{\tilde y}$ and for~$\gamma$ almost all~$(\tilde x,z,\tilde y)$, \begin{eqnarray*}
U(\tilde x,z)-p(z)&=&\max_{z'\in Z}\left( U(\tilde x,z')-p(z')\right),\\
p(z)-C(\tilde y,z)&=&\max_{z'\in Z}\left( p(z')-C(\tilde y,z)\right).
\end{eqnarray*}
In addition, observed qualities $z\in Z(\tilde x,\tilde y)$ maximizing joint surplus $U(\tilde x,z)-C(\tilde y,z)$ for each $\tilde x\in\tilde X$ and $\tilde y\in\tilde Y$, lie in the interior of the set of feasible qualities $Z$ and $Z(\tilde x,\tilde y)$ is assumed to have a measurable selection. The joint surplus $U(\tilde x,z)-C(\tilde y,z)$ is finite everywhere. We assume full participation in the market.
\end{myassump}
We assume that consumer types~$\tilde x$ are only partially observable
to the analyst. We write~$\tilde x=(x,\varepsilon)$, where $x\in X\subseteq\mathbb{R}%
^{d_x}$ is the observable part of the type vector, and~$\varepsilon\in%
\mathbb{R}^{d_\varepsilon}$ is the unobservable part.
We shall make a
separability assumption that will allow us to specify constraints on the
interaction between consumer unobservable type $\varepsilon$ and good
quality~$z$ in order to identify interactions between observable type~$x$
and good quality~$z$.

\begin{myassump}{H}[Unobservable heterogeneity]
\label{ass:uh} Consumer type~$\tilde x$ is composed of
observable type~$x$ with distribution~$P_x$ on~$X\subseteq\mathbb{R}^{d_x}$ and
unobservable type~$\varepsilon$ with a priori specified conditional distribution~$%
P_{\varepsilon\vert x}$ on~$\mathbb{R}^{d_\varepsilon}$, with~$d_\varepsilon\leq d_z$.
The utility of
consumers can be decomposed as~$U(\tilde x,z)=\bar
U(x,z)+\zeta(x,\varepsilon,z)$, where the functional form of~$\zeta$ is known,
but that of~$\bar U$ is~not.
\end{myassump}
For convenience, we shall use the transformation~$V(x,z):=p(z)-\bar U(x,z)$. The latter will be called the consumer's {\em potential}, in line with the optimal transport terminology.
The following condition is
commonly known as the {\em Twist Condition} in the optimal transport literature.
\begin{myassump}{S2}[Twist Condition]
\label{ass:twist}
For all $x$ and $z$, the following hold.
\begin{enumerate}
\item[(A)]
The gradient $\nabla_z\zeta(x,\varepsilon,z)$ of $\zeta(x,\varepsilon,z)$ in
$z$ is injective as a function of $\varepsilon\in\mbox{Supp}(P_{\varepsilon\vert x})$.
\item[(B)] The gradient $\nabla_\varepsilon\zeta(x, \varepsilon,z)$ of $\zeta(x,\varepsilon,z)$ in
$\varepsilon$ is injective as a function of $z\in Z$.
\end{enumerate}
\end{myassump}
From \citet{GN:65}, it is sufficient that $D^2_{z\varepsilon}\zeta(x,\varepsilon,z)$ be positive definite everywhere for Assumption~\ref{ass:twist} to be satisfied.

\subsection{Absolute continuity of the distribution of traded qualities}

The consumer's problem can be written
\begin{eqnarray}  \label{eq:zconv}
\sup_{z\in Z}\{\zeta(x,\varepsilon,z)-V(x,z)\}:=V^\zeta(x,\varepsilon).
\end{eqnarray}
Equation~(\ref{eq:zconv}) defines a generalized notion of convex
conjugation, which can be inverted, similarly to
convex conjugation, into:
\begin{eqnarray}  \label{eq:max}
V^{\zeta\zeta}(x,z)=\sup_{\varepsilon\in\mathbb R^{d_\varepsilon}}\{\zeta(x,\varepsilon,z)-V^%
\zeta(x,\varepsilon)\},
\end{eqnarray}
where $V^{\zeta\zeta}$ is called the {\em double conjugate} of $V$.
In the special case~$\zeta(x,\varepsilon,z)=z'\varepsilon$,~$\zeta$-conjugate simplifies to the ordinary
convex conjugate of convex analysis and convexity of a lower semi-continuous function is equivalent to equality with its double conjugate. Hence, by extension,
equality with its double~$\zeta$-conjugate defines a generalized notion of convexity (see Definition~2.3.3 page~86 of \citet{Villani:2003}).
\begin{definition}[$\protect\zeta$-convexity]
\label{def:zconv} A function $V$ is called $\zeta$-convex if $V=V^{\zeta\zeta}$.
\end{definition}
We establish~$\zeta$-convexity of the potential as a step towards a notion of differentiability, in analogy with
convex functions, which are locally Lipschitz, hence almost surely differentiable, by Rademacher's Theorem (see for instance \citet{Villani:2009}, Theorem~10.8(ii)).
\begin{lemma}
\label{lemma:zconv}
Under~\ref{ass:ec} and~\ref{ass:uh}, the function~$z\mapsto V(x,z)$ is~$P_{z\vert x}$ a.s.~$\zeta$-convex, for all $x$.
\end{lemma}
This result only provides information for pairs~$(x,z)$, where type~$x$ consumers choose good quality~$z$ in equilibrium. In order to obtain a global smoothness result on the potential~$V$, we need conditions under which the endogenous distribution of good qualities traded in equilibrium is absolutely continuous with respect to Lebesgue measure on~$\mathbb R^{d_z}$. They include absolute continuity of the distribution of unobserved heterogeneity, additional smoothness conditions on preferences and technology and the Twist Assumption~\ref{ass:twist}(B), which requires the dimension of unobserved heterogeneity to be the same as the dimension of the good quality space, i.e.,~$d_\varepsilon=d_z$.

\begin{myassump}{H'}
\label{ass:uh'} Assumption~\ref{ass:uh} holds and the distribution of unobserved tastes $P_{\varepsilon\vert x}$ is absolutely continuous on~$\mathbb R^{d_z}$ with respect to Lebesgue measure for all $x$.
\end{myassump}

\begin{myassump}{R'}[Conditions for absolute continuity of $P_{z\vert x}$]
\label{ass:supp3} The following hold.
\begin{enumerate}
\item The Hessian of total surplus
$D^2_{zz} \left( U(x,\varepsilon,z) - C(\tilde y,z) \right) $ is bounded above; that is $\Vert D^2_{zz} \left( U(x,\varepsilon,z) - C(\tilde y,z) \right) \Vert \leq M_1$ for all $x,\varepsilon,z,\tilde y$, for some fixed $M_1$.
\item For each $x\in X$, $\Vert\nabla_z\zeta(x,\varepsilon,z)\Vert\rightarrow\infty$ as $\|\varepsilon\|\rightarrow\infty$, uniformly in $z\in Z$.
\item The matrix $D^2_{\varepsilon z }\zeta(x,\varepsilon,z)$ has full rank for all $x, \varepsilon, z$. Its inverse $[D^2_{\varepsilon z }\zeta(x,\varepsilon,z)]^{-1}$ has uniform upper bound~$M_0$:
$\left\Vert[D^2_{\varepsilon z }\zeta(x,\varepsilon,z)]^{-1}\right\Vert\leq M_0$ for all $x,\varepsilon,z$, for some fixed~$M_0$.
\end{enumerate}
\end{myassump}

We then obtain absolute continuity of the endogenous distribution of traded qualities.
\begin{lemma}
\label{lemma:abs3}
Under Assumptions~\ref{ass:ec}, ~\ref{ass:uh'},~\ref{ass:twist} and~\ref{ass:supp3}, the endogenous distribution $P_{z\vert x}$ of qualities traded at equilibrium is absolutely continuous with respect to Lebesgue measure.
\end{lemma}
Lemma~\ref{lemma:ad} below shows everywhere differentiability of the double conjugate potential~$V^{\zeta\zeta}$. This doesn't imply differentiability of~$V$ everywhere, since~$V$ is~$\zeta$-convex (i.e.,~$V=V^{\zeta\zeta}$) only~$P_{z\vert x}$ almost everywhere. However, combined with Lemma~\ref{lemma:abs3}, this yields approximate differentiability of~$z\mapsto V(x,z)$ as defined in Definition~\ref{def:apd}.

\subsubsection*{Proof of Lemma~\ref{lemma:abs3}}

{\footnotesize

For an upper semi-continuous map $S:\mathbb R^{d_1}\times\mathbb R^{d_2}\rightarrow\mathbb R$, possibly indexed by $x$, and probability distributions $\mu$ on $\mathbb R^{d_1}$ and $\nu$ on $\mathbb R^{d_2}$, possibly conditional on $x$, let $T_S(\mu,\nu)$ be the solution of the Kantorovich problem, i.e.,
$$
T_S(\mu,\nu) =\sup_{\gamma\in\mathcal M(\mu,\nu) }\int_{\mathbb{R}^{d_1} \times \mathbb{R}^{d_2} } S(\varepsilon, z)d\gamma(\varepsilon, z).
$$
 We state and prove three intermediate results in Steps~1, 2 and~3 before completing the proof of Lemma~\ref{lemma:abs3}.

\underline{Step 1:} We first show that $P_z$ achieves
\begin{eqnarray}
\label{equilibrium1}
\max_{\nu\in\Delta(Z)} [T_U(P_{\tilde x},\nu) +T_{-C}(P_{\tilde y},\nu)].
\end{eqnarray}
This step follows the method of proof of Proposition~3 in \citet{CE:2010}. Take any probability distribution~$\nu\in\Delta(Z)$,~$\mu_1\in\mathcal M(P_{\tilde x},\nu)$ and~$\mu_2\in\mathcal M(P_{\tilde y},\nu)$. By the Disintegration Theorem (see for instance \citet{Pollard:2002}, Theorem~9, page~117), there are families of conditional probabilities $\mu_1^z$ and $\mu_2^z$, $z\in Z$, such that $\mu_1=\mu_1^z\otimes\nu$ and $\mu_2=\mu_2^z\otimes\nu$. Define the probability $\gamma\in\Delta(\tilde X\times\tilde Y\times Z)$ by
\[
\int_{\tilde X\times\tilde Y\times Z} F(\tilde x,\tilde y,z)d\gamma(\tilde x,\tilde y,z)=\int_{\tilde X\times\tilde Y\times Z} F(\tilde x,\tilde y,z)d\mu^z_1(\tilde x)d\mu^z_2(\tilde y)d\nu(z),
\] for each $F\in C^0(\tilde X,\tilde Y,Z)$.
By construction, the projection $\mu$ of $\gamma$ on $\tilde X\times\tilde Y$ belongs to $\mathcal M(P_{\tilde x},P_{\tilde y})$. We therefore have:
\begin{eqnarray*}
\int_{\tilde X\times Z}U(\tilde x,z)d\mu_1(\tilde x,z)-\int_{\tilde Y\times Z}C(\tilde y,z)d\mu_2(\tilde y,z)
&=& \int_{\tilde X\times \tilde Y\times Z}\left[ U(\tilde x,z)-C(\tilde y,z) \right] d\gamma(\tilde x,\tilde y,z) \\
&\leq& \int_{\tilde X\times \tilde Y\times Z}\sup_{z\in Z}\left[ U(\tilde x,z)-C(\tilde y,z) \right] d\gamma(\tilde x,\tilde y,z) \\
&=&\int_{\tilde X\times \tilde Y}\sup_{z\in Z}\left[ U(\tilde x,z)-C(\tilde y,z) \right] d\mu(\tilde x,\tilde y).
\end{eqnarray*}
Since $\nu$, $\mu_1$ and $\mu_2$ are arbitrary, the latter sequence of displays shows that the value of program (\ref{equilibrium1}) is smaller than or equal to the value of program~(\ref{equilibrium2}) below.
\begin{eqnarray}
\label{equilibrium2}
\sup_{\mu\in\mathcal M(P_{\tilde x},P_{\tilde y})}\int_{\tilde X\times \tilde Y}\sup_{z\in Z}\left[ U(\tilde x,z)-C(\tilde y,z) \right] d\mu(\tilde x,\tilde y).
\end{eqnarray}
By Assumption~\ref{ass:ec}, Program (\ref{equilibrium2}) is attained by the projection $\bar\mu$ on $\tilde X\times\tilde Y$ of the equilibrium probability distribution $\gamma$.
Let $\mu_1$ and $\mu_2$ be the  projections of the equilibrium probability distribution $\gamma$ onto $\tilde X \times Z$ and  $\tilde Y \times Z$.
Now, since the value of (\ref{equilibrium1}) is smaller than or equal to the value of (\ref{equilibrium2}), and the latter is attained by $\bar\mu$, we have
\begin{eqnarray*}
(\ref{equilibrium1})
&\leq& \int_{\tilde X\times \tilde Y}\sup_{z\in Z}\left[ U(\tilde x,z)-C(\tilde y,z) \right] d\bar\mu(\tilde x,\tilde y)\\
&=& \int_{\tilde X\times Z} U(\tilde x,z)d\mu_1(\tilde x,z)+\int_{\tilde Y\times Z}\left[ -C(\tilde y,z)\right] d\mu_2(\tilde y,z)\\
&\leq& \sup_{\mu_1\in\mathcal M(P_{\tilde x},P_z)}\int_{\tilde X\times Z} U(\tilde x,z)d\mu_1(\tilde x,z)+\sup_{\mu_2\in\mathcal M(P_{\tilde y},P_z)}\int_{\tilde Y\times Z}\left[ -C(\tilde y,z)\right] d\mu_2(\tilde y,z)
\leq (\ref{equilibrium1}),
\end{eqnarray*}
so that equality holds throughout, and $P_z$ solves (\ref{equilibrium1}).

\underline{Step 2:} We then show by contradiction that for $P_x$ almost every $x$, $P_{z|x}$ achieves
\begin{eqnarray}\label{equilibrium}
\max_\nu [T_U(P_{\varepsilon\vert x},\nu) +T_{-C}(P_{\tilde y|x},\nu)].
\end{eqnarray}
Assume the conclusion is false; that is, that there is some set $E \subset X$,  with $P_x(E) >0$ such that, for each $x$ in $E$  there is some probability distribution $P_{z|x}'$ on $Z$ such that
$$
 T_U(P_{\varepsilon| x}, P_{z|x}') + T_{-C}(P_{\tilde y|x},P_{z|x}') > T_U(P_{\varepsilon| x}, P_{z|x}) + T_{-C}(P_{\tilde y|x},P_{z|x}).
$$
For $x \notin E$, we set $P_{z|x}' = P_{z|x}$. This means that for every $x$, $P_{z|x}'$ is defined and we have
$$
 T_U(P_{\varepsilon| x}, P_{z|x}') + T_{-C}(P_{\tilde y|x},P_{z|x}') \geq  T_U(P_{\varepsilon| x}, P_{z|x}) + T_{-C}(P_{\tilde y|x},P_{z|x});
$$
moreover, the inequality is strict on a set $E$ of positive $P_x$ measure, so that
\begin{eqnarray}\label{eqn: surplus improvement}
\int_X[ T_U(P_{\varepsilon| x}, P_{z|x}') + T_{-C}(P_{\tilde y|x},P_{z|x}') ]dP_x>\int_X[T_U(P_{\varepsilon| x}, P_{z|x}) + T_{-C}(P_{\tilde y|x},P_{z|x})]dP_x.
\end{eqnarray}
We define a new probability distribution $P_z'$ on $Z$ by
$P_z'(A) =\int_{X} P_{z|x}'(A) dP_{x}(x),$
for each Borel set $A \subseteq Z$.  We claim that
$$
T_U(P_{\tilde x},P_z') + T_{-C}(P_{\tilde y},P_z') >T_U(P_{\tilde x},P_z) + T_{-C}(P_{\tilde y},P_z) ,
$$
which contradicts the maximality of $P_z$ in \eqref{equilibrium1} established in Step~1.
Let $P_{\varepsilon z|x}'$ achieve the optimal transportation between $P_{\varepsilon| x}$ and $P'_{z|x}$; that is,
$$
T_U(P_{\varepsilon| x}, P_{z|x}')  =\int_{\mathbb{R}^{d_{\varepsilon}} \times Z} U(x,\varepsilon, z)dP_{\varepsilon z|x}.
$$
Existence of such a $P_{\varepsilon z|x}$ is given by Theorem~1.3 page 19 of \citet{Villani:2003}
.
By construction, the marginals of~$P_{\varepsilon z|x}'$ are~$P_{\varepsilon| x}$ and~$P'_{z|x}$.
We define $P'_{\tilde xz}$ on $\tilde X \times Z$ for any Borel subset $B\subseteq \tilde X\times Z$ by $P'_{\tilde xz} (B):=\int_X P'_{\varepsilon z|x}(B_x)dP_x$,
where we define~$B_x:=\{(\varepsilon,z)\in\mathbb R^{d_z}\times Z:\;(x,\varepsilon,z)\in B\}$.
Then, for any Borel subset $A \subseteq Z$,
$$
P'_{\tilde xz}(\mathbb{R}^{d_z} \times X \times A) =\int_X P'_{\varepsilon z|x}(\mathbb{R}^{d_z} \times A)dP_x=\int_X P'_{z|x}(A)dP_x=P'_z(A).
$$
Similarly, for any Borel subset $B \subseteq \mathbb{R}^{d_z} \times X$,
$$
P'_{\tilde xz}(B \times Z) =\int_X P'_{\varepsilon z|x}(B_x \times Z)dP_x = \int_X P_{\varepsilon|x}(B_x)dP_x =P_{\tilde x} (B).$$
The last two calculations show that $P_{\tilde x}$ and $P'_z$ are the $\tilde x$ and $z$ marginals of $P'_{\tilde xz}$, respectively, and so it follows by definition that
\begin{eqnarray}\label{eqn: x,z optimality}
T_U(P_{\tilde x}, P'_z) \geq  \int_{ X\times\mathbb{R}^{d_z} \times Z} U(\tilde x, z)dP'_{\tilde xz}.
\end{eqnarray}
Similarly, we  let $P'_{\tilde yz|x}$ achieve the optimal transportation between $P_{\tilde y|x}$ and $P'_{z|x}$ and define $P'_{\tilde yz}$ on $\tilde Y \times Z$ by $P'_{\tilde yz} =\int_X P'_{\tilde yz|x}dP_x$;  a similar argument to above shows that the $\tilde y$ and $z$ marginals of $P'_{\tilde yz} $ are $P_{\tilde y}$ and $P'_z$, respectively.  Therefore,
\begin{eqnarray}\label{eqn: y,z optimality}
 T_{-C}(P_{\tilde y}, P'_z) \geq \int_{\tilde Y  \times Z} \left[ -C(\tilde y,z) \right] dP'_{\tilde yz}.
\end{eqnarray}

Now, equations \eqref{eqn: x,z optimality} and \eqref{eqn: y,z optimality} yield
\begin{eqnarray*}
T_U(P_{\tilde x}, P'_z) +T_{-C}(P_{\tilde y}, P'_z) &\geq&
\int_{ \tilde X \times Z} U(\tilde x, z)dP'_{\tilde xz}+\int_{\tilde Y  \times Z} \left[ -C(\tilde y, z) \right] dP'_{\tilde yz}\\
&=&\int_X\left[\int_{ \mathbb{R}^{d_z} \times Z} U(x,\varepsilon, z)dP'_{\varepsilon z|x}\right]dP_x + \int_X\left[\int_{\tilde Y  \times Z} \left[ -C(\tilde y, z) \right] dP'_{\tilde yz|x}\right]dP_x\\
&=&\int_X  T_U(P_{\varepsilon |x}, P'_{z|x})dP_x + \int_X[ T_{-C}(P_{\tilde y|x}, P_{z|x}')dP_x\\
&>&\int_X  T_U(P_{\varepsilon |x}, P_{z|x})dP_x + \int_X[ T_{-C}(P_{\tilde y|x}, P_{z|x})dP_x\\
&\geq &\int_X\left[\int_{ \mathbb{R}^{d_z} \times Z} U(x,\varepsilon, z)dP_{\varepsilon z|x}\right]dP_x + \int_X\left[\int_{\tilde Y  \times Z} \left[ -C(\tilde y, z) \right] dP_{\tilde yz|x}\right]dP_x\\
&=&\int_{ \tilde X \times Z} U(\tilde x, z)dP_{\tilde xz} + \int_{\tilde Y  \times Z} \left[ -C(\tilde y, z) \right] dP_{\tilde yz}\\
&=&T_U(P_{\tilde x}, P_z) +T_{-C}(P_{\tilde y}, P_z),
\end{eqnarray*}
where we have used \eqref{eqn: surplus improvement} in the fourth line above.  This establishes the contradiction and completes the proof.

\underline{Step 3:} We finally now show that for $P_x$ almost all $x$, $P_{z\vert x}$ is the unique solution of Program~(\ref{equilibrium}). This step follows the method of proof of Proposition~4 in \citet{CE:2010}.
We know from Step~2 that~$P_{z\vert x}$ is a solution to~(\ref{equilibrium}). Suppose~$\nu\in\Delta(Z)$ also solves~(\ref{equilibrium}). Let~$\mu_{\varepsilon z\vert x}$ and~$\mu_{\tilde yz\vert x}$ achieve~$T_U(P_{\varepsilon \vert x},\nu)$ and~$T_{-C}(P_{\tilde y\vert x},\nu)$. The latter exist by Theorem~1.3 page 19 of \citet{Villani:2003}
.
We therefore have
\begin{eqnarray*}
\int_{\mathbb R^{d_z}\times Z}U(x,\varepsilon,z)d\mu_{\varepsilon z\vert x}(\varepsilon,z\vert x)+\int_{\tilde Y\times Z}[-C(\tilde y,z)]d\mu_{\tilde yz\vert x}(\tilde y,z\vert x)=T_U(P_{\varepsilon\vert x},\nu)+T_{-C}(P_{\tilde y\vert x},\nu)=(\ref{equilibrium}).
\end{eqnarray*}
Let $\varphi_x^U$ denote the~$U$-conjugate of a function~$\varphi$ on~$Z$, as defined
by~$\varphi_x^U(\varepsilon):=\sup_{z\in Z}\{U(x,\varepsilon,z)-\varphi(z)\}$. Similarly, let~$(-\varphi)^{-C}$ be the~$(-C)$-conjugate of~$(-\varphi)$, defined as~$(-\varphi)^{-C}(\tilde y):=\sup_{z\in Z}\{-C(\tilde y,z)+\varphi(z)\}$. By definition, for any function~$\varphi$ on~$Z$, and any~$(\varepsilon,\tilde y,z)$ on~$\mathbb R^{d_z}\times\tilde Y\times Z$, we have
\begin{eqnarray}\label{eq:young}
\varphi_x^U(\varepsilon)+\varphi(z)\geq U(x,\varepsilon,z),\mbox{ and }[-\varphi]^{-C}(\tilde y)-\varphi(z)\geq -C(\tilde y,z).
\end{eqnarray}
By Assumption \ref{ass:ec}, the price function $p$ of the hedonic equilibrium satisfies
\begin{eqnarray}\label{eq:ec1}
p_x^U(\varepsilon)+p(z)= U(x,\varepsilon,z), P_{\varepsilon z\vert x}\mbox{ a.s. and }[-p]^{-C}(\tilde y)-p(z)= -C(\tilde y,z), P_{\tilde yz\vert x}\mbox{ almost surely}.
\end{eqnarray}
Therefore, $p$ achieves the minimum in the program
\begin{eqnarray}\label{equilibrium3}
&&\inf_{\varphi}\left\{ \int_{\mathbb R^{d_z}\times Z}[\varphi_x^U(\varepsilon)+\varphi(z)]dP_{\varepsilon z\vert x}(\varepsilon, z\vert x)+\int_{\tilde Y\times Z} [(-\varphi)^{-C}(\tilde y)-\varphi(z)]dP_{\tilde yz\vert x}(\tilde y,z\vert x) \right\}\nonumber\\
&=&\inf_{\varphi}\left\{ \int_{\mathbb R^{d_z}}\varphi_x^U(\varepsilon)dP_{\varepsilon\vert x}(\varepsilon\vert x)+\int_Z\varphi(z)dP_{z\vert x}(z\vert x)+\int_{\tilde Y} (-\varphi)^{-C}(\tilde y)dP_{\tilde y\vert x}(\tilde y\vert x)-\int_Z\varphi(z)dP_{z\vert x}(z\vert x) \right\}\nonumber\\
&=&\inf_\varphi\left\{ \int_{\mathbb R^{d_z}}\varphi_x^U(\varepsilon)dP_{\varepsilon\vert x}(\varepsilon\vert x)+\int_{\tilde Y} (-\varphi)^{-C}(\tilde y)dP_{\tilde y\vert x}(\tilde y\vert x) \right\}.
\end{eqnarray}
Since $p$ achieves (\ref{equilibrium3}), we have
\begin{eqnarray*}
(\ref{equilibrium3})&=&\int_{\mathbb R^{d_z}}p_x^U(\varepsilon)dP_{\varepsilon\vert x}(\varepsilon\vert x)+\int_{\tilde Y} (-p)^{-C}(\tilde y)dP_{\tilde y\vert x}(\tilde y\vert x) \\
&=&\int_{\mathbb R^{d_z}\times Z}[p_x^U(\varepsilon)+p(z)]d\mu_{\varepsilon z\vert x}(\varepsilon, z\vert x)+\int_{\tilde Y\times Z} [(-p)^{-C}(\tilde y)-p(z)]d\mu_{\tilde yz\vert x}(\tilde y,z\vert x),
\end{eqnarray*}
since $\mu_{\varepsilon z\vert x}\in\mathcal M(P_{\varepsilon\vert x},\nu)$ and $\mu_{\tilde yz\vert x}\in\mathcal M(P_{\tilde y\vert x},\nu)$ by construction. By the strong duality result in Theorem~3 of \citet{CE:2010}, we have (\ref{equilibrium})=(\ref{equilibrium3}). Hence
\begin{eqnarray*}
&& \int_{\mathbb R^{d_z}\times Z}U(x,\varepsilon,z)d\mu_{\varepsilon z\vert x}(\varepsilon,z\vert x)+\int_{\tilde Y\times Z}[-C(\tilde y,z)]d\mu_{\tilde yz\vert x}(\tilde y,z\vert x)\\
&&\hskip100pt=
\int_{\mathbb R^{d_z}\times Z}[p_x^U(\varepsilon)+p(z)]d\mu_{\varepsilon z\vert x}(\varepsilon, z\vert x)+\int_{\tilde Y\times Z} [(-p)^{-C}(\tilde y)-p(z)]d\mu_{\tilde yz\vert x}(\tilde y,z\vert x),
\end{eqnarray*}
which, given the inequalities (\ref{eq:young}), implies
\begin{eqnarray*}
p_x^U(\varepsilon)+p(z)= U(x,\varepsilon,z), \mu_{\varepsilon z\vert x}\mbox{ almost surely, and }[-p]^{-C}(\tilde y)-p(z)= -C(\tilde y,z), \mu_{\tilde yz\vert x}\mbox{ almost surely}.
\end{eqnarray*}
We therefore have $p_x^U(\varepsilon)+p(z)= U(x,\varepsilon,z)$ both $P_{\varepsilon z\vert x}$ and $\mu_{\varepsilon z\vert x}$ almost surely.
The $U$-conjugate $p_x^U$ of $p$ is locally Lipschitz by Lemma~C.1 of \citet{GM:96}, hence differentiable $P_{\varepsilon|x}$ almost everywhere by Rademacher's Theorem (see for instance \citet{Villani:2009}, Theorem~10.8(ii)). We can therefore apply the Envelope Theorem to obtain the equation below.
\begin{eqnarray}
\label{eq:rad}
\nabla p_x^U(\varepsilon)=\nabla_\varepsilon U(x,\varepsilon,z)=\nabla_\varepsilon\zeta(x,\varepsilon,z),
\mbox{ both }P_{\varepsilon z|x}\mbox{ and }\mu_{\varepsilon z|x}\mbox{ almost everywhere}.
\end{eqnarray}
By Assumption~\ref{ass:twist}(B), $\nabla_\varepsilon\zeta(x,\varepsilon,z)$ is injective as a function of $z$. Therefore, for each $\varepsilon$, there is a unique~$z$ that satisfies (\ref{eq:rad}). This defines a map $T:\mathbb R^{d_z}\rightarrow Z$, such that $z=T(\varepsilon)$ both $P_{\varepsilon z|x}$ and $\mu_{\varepsilon z|x}$ almost everywhere. Since the projections of $P_{\varepsilon z|x}$ and $\mu_{\varepsilon z|x}$ with respect to $\varepsilon$ are the same, namely $P_{\varepsilon|x}$, we therefore have
\[
P_{\varepsilon z|x}=(id,T)\#P_{\varepsilon|x}=\mu_{\varepsilon z|x}.
\]
The two probability distributions $P_{\varepsilon z|x}$ and $\mu_{\varepsilon z|x}$ being equal, they must also share the same projection with respect to $z$ and $\nu=P_{z|x}$ as a result.

Armed with the results in Steps~1 to~3, we are ready to prove Lemma~\ref{lemma:abs3}.
Fix~$x\in X$ such that~$P_{z|x}$ is the unique solution to Program~\eqref{equilibrium}. Let~$P_{\tilde y}^N$ be a sequence of discrete probability distributions with~$N$ points of support on~$\tilde Y \subseteq\mathbb{R}^{d_{\tilde y}}$ converging weakly to~$P_{\tilde y|x}$.
The set of probability distributions on the compact set~$Z$ is compact relative to the topology of weak convergence. By Assumption~\ref{ass:supp3},~$U$ and~$C$ are continuous, hence, by Theorem~5.20 of \citet{Villani:2009}, the functional~$\nu\mapsto T_U(P_{\varepsilon\vert x},\nu) +T_{-C}(P_{\tilde y}^N,\nu)$ is continuous with respect to the topology of weak convergence. Program $\max_{\nu\in\Delta(Z)}[ T_U(P_{\varepsilon\vert x},\nu) +T_{-C}(P_{\tilde y}^N,\nu)],$ therefore, has a solution we denote~$P_z^N$.

We first show that $P_z^N$ converges weakly to $P_{z|x}$.
For any probability measure $\nu$ on $Z$, we have
\begin{eqnarray}\label{eq:sub}
T_U(P_{\varepsilon\vert x},\nu) +T_{-C}(P_{\tilde y}^N,\nu) \leq T_U(P_{\varepsilon\vert x},P_z^N) +T_{-C}(P_{\tilde y}^N,P_z^N).
\end{eqnarray}
Since $Z$ is compact, $\Delta(Z)$ is compact with respect to weak convergence. Hence, we can extract from $(P_z^N)$ a convergent subsequence, which we also denote~$(P_z^N)$, as is customary, and we call the limit~$\bar P$. By the stability of optimal transport~(\citet{Villani:2009}, Theorem 5.20), we have~$T_{-C}(P_{\tilde y}^N,\nu)\rightarrow T_{-C}(P_{\tilde y|x},\nu)$, $T_U(P_{\varepsilon\vert x},P_z^N) \rightarrow T_U(P_{\varepsilon\vert x},\bar P)$  and~$T_{-C}(P_{\tilde y}^N, \bar P) \rightarrow T_{-C}(P_{\tilde y|x}, \bar P) $, and so passing to the limit in inequality~\eqref{eq:sub} yields
 $$
T_U(P_{\varepsilon\vert x},\nu) +T_{-C}(P_{\tilde y|x},\nu) \leq T_U(P_{\varepsilon\vert x},\bar P) +T_{-C}(P_{\tilde y|x} , \bar P).
$$
As this holds for any probability distribution~$\nu$ on~$\mathbb{R}^{d_z}$, it implies that~$\bar P$ is optimal in Program~\eqref{equilibrium}. By uniqueness proved in Step~3, we then have $\bar P = P_{z|x}$, as desired.

We are now ready to complete the proof of Lemma~\ref{lemma:abs3}. Combining Steps~1 to~3 above, we know that $P_z^N$ is the marginal with respect to $z$ of a hedonic equilibrium distribution $\gamma^N$ on $\mathbb R^{d_z}\times Z\times\{\tilde y_1,\ldots,\tilde y_N\}$, with consumer and producer distributions $P_{\varepsilon\vert x}$ and $P_{\tilde y}^N$, respectively. By Step 1 in the proof of Theorem~1(1) in \citet{CGHP:2020} applied to this hedonic equilibrium with producer type distribution $P_{\tilde y}^N$, for each $N$, there is an optimal map~$F_N$ pushing $P_z^N$ forward to $P_{\varepsilon\vert x}$ i.e., $P_{\varepsilon\vert x}=F_N\#P_z^N$.
For $i=1,2,...N$, we define the subsets~$S_i^N \subseteq Z$ by
$$
S_i ^N=\{ z\in Z:\;z\in\mbox{arg}\max_w(U(x,F_N(z), w) - C(\tilde y_i, w)) \};
$$
note that $S_i^N$ is the set of quality vectors that are produced by producer type $\tilde y_i$.  Since $P_{\tilde y}^N$ has finite support~$\{\tilde y_1,\ldots,\tilde y_N\}$, $P_z^N$ almost all $z$ belong to some $S_i$, $i=1,\ldots,N$. We then set
$$
E^N_i =\{ z \in S_i^N \text{ and } z \notin S_j^N \text{ for all } j <i\},
$$
for each $i=1,\ldots,N$, with the convention $S_0^N=\varnothing$.
The $E^N_i$ are disjoint, and $E^N=\cup_{i=1}^N E^N_i$ has full $P_z^N$ measure, $P_z^N(E^N)=1$.  On each $E^N_i$,  $F_N$ coincides with a map $G_i$, which satisfies
$$
\nabla_zU(x,G_i(z),z) -\nabla_zC(\tilde y_i, z)=0
$$
or
$$
\nabla_z\bar U(x,z) + \nabla_z\zeta(x,G_i(z),z) -\nabla_zC(\tilde y_i, z)=0.
$$
By Assumption~\ref{ass:supp3}(3) and the Implicit Function Theorem, $G_i$ is differentiable and we have
$$
[D^2_{zz}\bar U(x,z) + D^2_{zz}\zeta(x,G_i(z),z) -D^2_{zz}C(\tilde y_i, z)] +D^2_{z\varepsilon}\zeta(x,G_i(z),z)DG_i(z)=0
$$
so that
\begin{eqnarray*}
DG_i(z)&=& [D^2_{z\varepsilon}\zeta(x,G_i(z),z)]^{-1}[D^2_{zz}\bar U(x,z) + D^2_{zz}\zeta(x,G_i(z),z) -D^2_{zz}C(\tilde y_i, z)].
\end{eqnarray*}
Therefore, $\Vert DG_i(z)\Vert \leq  M_0M_1:=C$ by Assumption~\ref{ass:supp3}.   Now, this implies that~$G_i$ is Lipschitz with constant~$C$, and therefore,~$F_N$ restricted to~$E_i^N$ is also Lipschitz with constant~$C$.

Now, for any Borel $A \subset \mathbb{R}^{d_z}$, we can write $A \cap E^N =\cup_{i=1}^N(A\cap E^N_i) $.  Therefore,
\begin{eqnarray}\label{pushforward}
P_{z}^N(A) =P_{z}^N(A\cap E^N )&\leq& P_{z}^N(F_N^{-1}(F_N(A\cap E^N )))\nonumber\\
&=&P_{\varepsilon\vert x}( F_N(A\cap E^N)) \nonumber\\
&=& P_{\varepsilon\vert x}(F_N(\cup_{i=1}^N(A\cap E^N_i) )\nonumber\\
&=&P_{\varepsilon\vert x}(\cup_{i=1}^NF_N(A\cap E^N_i) ).
\end{eqnarray}

Denote by $|A|$ the Lebesgue measure of a set $A$ in the rest of this proof. We now show the absolute continuity of of $P_{z|x}$ by contradiction.  Assume not; then there is a set $A$ with $|A|=0$  but $\delta=P_{z|x}(A) > 0$.  We can choose open neighbourhoods, $A \subseteq A_k $, with $|A_k| \leq \frac{1}{k}$;  by weak convergence of $P_z^N$ to $P_{z|x}$, we have
$$
\liminf_{N \rightarrow\infty} P_z^N(A_k) \geq P_{z|x}(A_k) \geq \delta
$$
and so for sufficiently large $N$, we have
$P_z^N(A_k) \geq \delta/2.$
On the other hand, by the Lipschitz property of $F_N$ on $E_i^N$, we have
$$|F_N(A_k\cap E_i^N)|  \leq C^{d_z}|A_k\cap E_i^N|,$$
so that
$$
|\cup_{i=1}^NF_N(A_k\cap E_i^N)|  \leq C^{d_z}\sum_{i=1}^N|A_k\cap E_i^N| =C^{d_z}|A_k \cap E^N| \leq C^{d_z}|A_k| \leq\frac{C^{d_z}}{k}.
$$
Now,   $P_{\varepsilon\vert x} $ is absolutely continuous, so that
$
P_{\varepsilon\vert x}(\cup_{i=1}^NF_N(A_k\cap E_i^N)) \rightarrow 0
$
as $k \rightarrow \infty$ (because $|\cup_{i=1}^NF_N(A_k\cap E_i^N)| \rightarrow 0$ as $k \rightarrow \infty$).
On the other hand, by \eqref{pushforward}, $P_{\varepsilon\vert x}(\cup_{i=1}^NF_N(A_k\cap E_i^N)) \geq P_z^N(A_k) \geq \frac{\delta}{2}$ for all $k$, which is a contradiction, completing the proof.
\hfill$\square$

}

\subsection{Smoothness of the endogenous price function}

\begin{definition}[Subdifferential]\label{def:subdif}
The {\em subdifferential} $\partial\psi(x_0)$ of a function $\psi:\mathbb R^d\rightarrow\mathbb R\cup\{+\infty\}$ at $x_0\in\mathbb R^d$ is the set of vectors $p\in\mathbb R^d$, called {\em subgradients}, such that $\psi(x)-\psi(x_0)\geq p'(x-x_0)+o(\|x-x_0\|)$.
\end{definition}

\begin{definition}[$\zeta$-conjugate]\label{def:zconj}
Let~$\zeta$ be a function on~$\mathbb R^{d_\varepsilon}\times\mathbb R^{d_z}$ that is continuously differentiable and satisfies Assumption~\ref{ass:twist}. The {\em $\zeta$-conjugate} $\psi^\zeta$ of a function $\psi:\mathbb R^d\rightarrow\mathbb R\cup\{+\infty\}$ is $\psi^\zeta(\varepsilon)=\sup_{z}\{\zeta(\varepsilon,z)-\psi(z)\}$.
\end{definition}

\begin{definition}[$\zeta$-subdifferential]\label{def:zsubdif}
Let~$\zeta$ be a function on~$\mathbb R^{d_\varepsilon}\times\mathbb R^{d_z}$ that is continuously differentiable and satisfies Assumption~\ref{ass:twist}. The {\em $\zeta$-subdifferential} $\partial^{\zeta}\psi(z)$ of a function $\psi:\mathbb R^d\rightarrow\mathbb R\cup\{+\infty\}$ at $z\in\mathbb R^d$ is the set of vectors $\varepsilon\in\mathbb R^d_{\varepsilon}$, such that $\psi^\zeta(\varepsilon)+\psi(z)=\zeta(\varepsilon,z)$.
\end{definition}

\begin{definition}[Local semiconvexity]
\label{def:sc}
A function $\psi:\mathbb R^d\rightarrow\mathbb R\cup\{+\infty\}$ is called {\em locally semiconvex} at~$x_0\in\mathbb R^d$ if there is a scalar $\lambda>0$ such that $\psi(x)+\lambda \|x\|^2$ is convex on some open ball centered at $x_0$.
\end{definition}

Since the term $\lambda\|x\|^2$ in the definition of local semiconvexity simply shifts the subdifferential by $2\lambda x$, we can extend Theorem 25.6 in \citet{Rockafellar:70} to locally semiconvex functions and obtain the following lemma.
\begin{lemma}
\label{lem:sconv}
Let~$\psi:\mathbb R^d\rightarrow\mathbb R\cup\{+\infty\}$ be a locally semiconvex function, and suppose that~$q\in\mathbb R^d$ is an extremal point in the subdifferential~$\partial \psi(x_0)$ of~$\psi$ at~$x_0$.  Then there exists a sequence~$x_n$ converging to~$x_0$, such that~$\psi$ is differentiable at each~$x_n$ and the gradient~$\nabla \psi(x_n)$ converges to~$q$.
\end{lemma}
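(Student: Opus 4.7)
The plan is to reduce the claim to the classical convex case of Rockafellar's Theorem 25.6 by exploiting local semiconvexity. By Definition~\ref{def:sc}, there exist $\lambda>0$ and an open ball $B=B_r(x_0)$ on which the function $\tilde\psi(x):=\psi(x)+\lambda\|x\|^2$ is convex. Since the map $x\mapsto\lambda\|x\|^2$ is smooth with gradient $2\lambda x$, the subdifferential of $\tilde\psi$ at any $x\in B$ satisfies the additive relation $\partial\tilde\psi(x)=\partial\psi(x)+2\lambda x$, and the addition of this single smooth vector preserves the affine structure, so $q$ is an extremal point of $\partial\psi(x_0)$ if and only if $\tilde q:=q+2\lambda x_0$ is an extremal point of $\partial\tilde\psi(x_0)$.

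Next I would apply Rockafellar's Theorem 25.6 (page 246 of \citeasnoun{Rockafellar:70}) to the convex function $\tilde\psi$ on $B$: since $\tilde q$ is extremal in $\partial\tilde\psi(x_0)$, there exists a sequence $x_n\to x_0$ (with $x_n\in B$ eventually) such that $\tilde\psi$ is differentiable at each $x_n$ and $\nabla\tilde\psi(x_n)\to\tilde q$. Differentiability of $\tilde\psi$ at $x_n$ is equivalent to differentiability of $\psi$ at $x_n$, since the two functions differ by the smooth map $x\mapsto\lambda\|x\|^2$, and in that case $\nabla\psi(x_n)=\nabla\tilde\psi(x_n)-2\lambda x_n$.

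Passing to the limit gives $\nabla\psi(x_n)\to \tilde q-2\lambda x_0=q$, as desired. The main point that needs a moment of care is the equivalence between extremality of $q$ in $\partial\psi(x_0)$ and extremality of $\tilde q$ in $\partial\tilde\psi(x_0)$: both subdifferentials are closed convex subsets of $\mathbb R^d$ (Remark~\ref{rk:sub}), and translation by the fixed vector $2\lambda x_0$ is an affine bijection mapping one to the other, hence preserves the set of extreme points. Aside from this bookkeeping, the argument is a direct transfer of the convex result and presents no further obstacle.
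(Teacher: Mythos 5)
Your proposal is correct and follows essentially the same route the paper intends: the paper itself introduces the lemma by noting that the term $\lambda\|x\|^2$ merely shifts the subdifferential by $2\lambda x$, so that Rockafellar's Theorem~25.6 for convex functions carries over, which is exactly your reduction via $\tilde\psi(x)=\psi(x)+\lambda\|x\|^2$, translation-invariance of extreme points, and subtraction of $2\lambda x_n$ in the limit. No gap worth flagging.
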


\begin{definition}[Approximate differentiability]\label{def:apd}
If we have
\[
\lim_{r\downarrow0}\frac{|B_r(x)\cap E|}{|B_r(x)|}=1,
\]
where $|B|$ is Lebesgue measure of $B$, $B_r(x)$ is an open ball of radius $r$ centered at $x$, and $E\subseteq \mathbb R^d$ is a measurable set,
$x$ is called {\em density point} of $E$.
Let $x_0$ be a density point of a measurable set $E\subseteq\mathbb R^d$ and let~$f:E\rightarrow\mathbb R$ be a measurable map. If there is a linear map $A: \mathbb R^d\rightarrow\mathbb R$ such that, for each $\eta>0$, $x_0$ is a density point of
\[
\left\{ x\in E:\; -\eta \leq \frac{f(x)-f(x_0)-A(x-x_0)}{\| x-x_0\|} \leq \eta \right\},
\]
then $f$ is said to be {\em approximately differentiable at} $x_0$, and $A$ is called the {\em approximate gradient} of $f$ at $x_0$.
The approximate gradient is uniquely defined, as shown below Definition~10.2 page 218 of \citet{Villani:2009}.
\end{definition}

\begin{lemma}[Approximate differentiability of the potential]
\label{lemma:ad}
Under Assumptions~\ref{ass:ec}, ~\ref{ass:uh'},~\ref{ass:twist} and~\ref{ass:supp3}, the potential~$z\mapsto V(x,z)$ is approximately differentiable~$P_{z\vert x}$ almost everywhere.
\end{lemma}

\subsubsection*{Proof of Lemma~\ref{lemma:ad}}

{\footnotesize

By Lemma~\ref{lemma:zconv}, we know that for each~$x$,~$V(x,z)=V^{\zeta\zeta}(x,z)$,~$P_{z|x}$ almost everywhere, and~$P_{z|x}$  is absolutely continuous with respect to Lebesgue measure, by Lemma~\ref{lemma:abs3}.  We will prove that~$V$ is therefore approximately differentiable~$P_{z|x}$ almost everywhere, with~$\nabla_{ap,z}V(x,z)=\nabla_z V^{\zeta\zeta}(x,z),$
where~$\nabla_{ap,z}V(x,z)$ denotes the approximate gradient of~$V(x,z)$ with respect to~$z$.
By Lemma~\ref{lemma:zconv}, we have $V=V^{\zeta\zeta}$, $P_{z|x}$ almost everywhere.  Moreover, as a $\zeta$ conjugate, $V^{\zeta\zeta}$ is locally Lipschitz by Lemma~C.1 of \citet{GM:96}, hence differentiable $P_{z|x}$ almost everywhere by Rademacher's Theorem (see for instance \citet{Villani:2009}, Theorem~10.8(ii)). Hence, there exists a set $S$ of full $P_{z|x}$  measure such that, for each $z_0 \in S$,
\begin{enumerate}
\item $V^{\zeta\zeta}(x,z)$ is differentiable with respect to $z$ at $z_0$.
\item $V(x,z_0)=V^{\zeta\zeta}(x,z_0)$.
\end{enumerate}
By Lebesgue's density theorem (still denoting Lebesgue measure of $B$ by $|B|$),
$$
\lim_{r\downarrow 0} \frac{|S \cap B_r(z)|}{|B_r(z)|} =1
$$
for Lebesgue almost every $z\in Z$, hence also for $P_{z\vert x}$ almost every $z$, by the absolute continuity of $P_{z|x}$. Since $S$ has $P_{z\vert x}$ measure~$1$,
the set $\bar S$ of density points of $S$,
$$
\bar S =\left\{z \in S: \lim_{r\downarrow 0} \frac{|S \cap B_r(z)|}{|B_r(z)|} =1\right\},
$$
therefore has $P_{z|x}$ measure~$1$.  Take any density point $z_0$ of $S$, i.e., $z_0 \in \bar S$.
Fix $\eta>0$. Since $z_0\in\bar S\subseteq S$, $V^{\zeta\zeta}$ is differentiable at $z_0$. Hence there is $r>0$ such that for all $z\in B_r(z_0)$,
\[
-\eta\leq \frac{V^{\zeta\zeta}(z)-V^{\zeta\zeta}(z_0)-\nabla_zV^{\zeta\zeta}(z_0)\cdot(z-z_0)}{\|z-z_0\|} \leq\eta.
\]
Since $V^{\zeta\zeta}=V$ on $S$, for all $z\in B_r(z_0)\cap S$, we have
\[
-\eta\leq \frac{V(z)-V(z_0)-\nabla_zV^{\zeta\zeta}(z_0)\cdot(z-z_0)}{\|z-z_0\|} \leq\eta.
\]
Therefore $B_r(z_0)\cap S=B_r(z_0)\cap \tilde S$, where
\[
\tilde S:=\left\{z\in S: -\eta\leq \frac{V(z)-V(z_0)-\nabla_zV^{\zeta\zeta}(z_0)\cdot(z-z_0)}{\|z-z_0\|} \leq\eta\right\}.
\]
As $z_0$ is a density point of $S$,
\[
\lim_{r\downarrow 0} \frac{|\tilde S \cap B_r(z_0)|}{|B_r(z_0)|} =\lim_{r\downarrow 0} \frac{|S \cap B_r(z_0)|}{|B_r(z_0)|} =1,
\]
so that $z_0$ is also a density point of $\tilde S$,
which means, by definition, that $V$ is approximately differentiable at~$z_0$, and that its approximate gradient is
$\nabla_{ap,z} V(x,z_0)=\nabla_z V^{\zeta\zeta}(x,z_0).$ The latter is true for any $z_0\in\bar S$ and $\bar S$ has $P_{z\vert x}$ measure $1$. Hence, $V$ is approximately differentiable $P_{z\vert x}$ almost everywhere.
\hfill%
$\square$

}

\bibliographystyle{abbrvnat}
\bibliography{HedonicsCGHP}

\begin{thebibliography}{61}
\providecommand{\natexlab}[1]{#1}
\providecommand{\url}[1]{\texttt{#1}}
\expandafter\ifx\csname urlstyle\endcsname\relax
  \providecommand{\doi}[1]{doi: #1}\else
  \providecommand{\doi}{doi: \begingroup \urlstyle{rm}\Url}\fi

\bibitem[Bajari and Benkard(2005)]{BB:2005}
P.~Bajari and L.~Benkard.
\newblock Demand estimation with heterogeneous consumers and unobserved product
  characteristics: a hedonic approach.
\newblock \emph{Journal of Political Economy}, 113:\penalty0 1239--1276, 2005.

\bibitem[Bartik(1987)]{Bartik:1987}
T.~Bartik.
\newblock The estimation of demand parameters in hedonic price models.
\newblock \emph{Journal of Political Economy}, 95:\penalty0 81--88, 1987.

\bibitem[Becker(1973)]{Becker:1973}
G.~Becker.
\newblock A theory of marriage: part i.
\newblock \emph{Journal of Political Economy}, 81:\penalty0 813--846, 1973.

\bibitem[Berry and Haile(2018)]{BH:2018}
S.~Berry and P.~Haile.
\newblock Identification of nonparametric simultaneous equations models with
  residual index structure.
\newblock \emph{Econometrica}, 86:\penalty0 289--315, 2018.

\bibitem[Berry et~al.(2013)Berry, Gandhi, and Haile]{BGH:2013}
S.~Berry, A.~Gandhi, and P.~Haile.
\newblock Connected substitutes and invertibility of demand.
\newblock \emph{Econometrica}, 81:\penalty0 2087--2111, 2013.

\bibitem[Bishop and Timmins(2011)]{BT:2011}
K.~Bishop and C.~Timmins.
\newblock Hedonic prices and implicit markets: Estimating marginal willingness
  to pay for differentiated products without instrumental variables.
\newblock \emph{NBER Working Paper Series}, 17611:\penalty0 1--38, 2011.

\bibitem[Bishop and Timmins(2019)]{BT:2019}
K.~Bishop and C.~Timmins.
\newblock Estimating marginal willingness to pay function without instrumental
  variables.
\newblock \emph{Journal of Urban Economics}, 109:\penalty0 66--83, 2019.

\bibitem[Brenier(1991)]{Brenier:91}
Y.~Brenier.
\newblock Polar factorization and monotone rearrangement of vector‐valued
  functions.
\newblock \emph{Communications on Pure and Applied Mathematics}, 44:\penalty0
  375--417, 1991.

\bibitem[Brown(1983)]{Brown:83}
J.~Brown.
\newblock Structural estimation in implicit markets.
\newblock In J.~Triplett, editor, \emph{The Measurement of Labor Cost}.
  University of Chicago Press, 1983.

\bibitem[Brown and Rosen(1982)]{BR:1982}
J.~Brown and H.~Rosen.
\newblock On the estimation of structural hedonic price models.
\newblock \emph{Econometrica}, 50:\penalty0 765--769, 1982.

\bibitem[Carlier and Ekeland(2010)]{CE:2010}
G.~Carlier and I.~Ekeland.
\newblock Matching for teams.
\newblock \emph{Economic Theory}, 42:\penalty0 397--418, 2010.

\bibitem[Carlier et~al.(2016)Carlier, Chernozhukov, and Galichon]{CCG:2014}
G.~Carlier, V.~Chernozhukov, and A.~Galichon.
\newblock Vector quantile regression.
\newblock \emph{Annals of Statistics}, 44:\penalty0 1165--1192, 2016.

\bibitem[Chernozhukov et~al.(2017)Chernozhukov, Galichon, Hallin, and
  Henry]{CGHH:2017}
V.~Chernozhukov, A.~Galichon, M.~Hallin, and M.~Henry.
\newblock Monge-kantorovich depth, quantiles, ranks and signs.
\newblock \emph{Annals of Statistics}, 45:\penalty0 223--256, 2017.

\bibitem[Chernozhukov et~al.(2020{\natexlab{a}})Chernozhukov, Galichon, Henry,
  and Pass]{CGHP:2020}
V.~Chernozhukov, A.~Galichon, M.~Henry, and B.~Pass.
\newblock Identification of hedonic equilibrium and nonseparable simultaneous
  equations.
\newblock \emph{arXiv:1709.09570}, 2020{\natexlab{a}}.

\bibitem[Chernozhukov et~al.(2020{\natexlab{b}})Chernozhukov, Galichon, Henry,
  and Pass]{CGHP:2020app}
V.~Chernozhukov, A.~Galichon, M.~Henry, and B.~Pass.
\newblock Regularity of equilibrium price and product distribution in hedonic
  models.
\newblock \emph{Online appendix}, 2020{\natexlab{b}}.

\bibitem[Chiappori et~al.(2010)Chiappori, McCann, and Nesheim]{CMN:2010}
P.-A. Chiappori, R.~McCann, and L.~Nesheim.
\newblock Hedonic price equilibria, stable matching, and optimal transport:
  equivalence, topology, and uniqueness.
\newblock \emph{Economic Theory}, 42:\penalty0 317--354, 2010.

\bibitem[Chiappori et~al.(2016)Chiappori, McCann, and Pass]{CMP:2016}
P.-A. Chiappori, R.~McCann, and B.~Pass.
\newblock Multidimensional matching.
\newblock arXiv:1604.05771, 2016.

\bibitem[Chiong et~al.(2013)Chiong, Galichon, and Shum]{CGS:2013}
K.~Chiong, A.~Galichon, and M.~Shum.
\newblock Estimating dynamic discrete choice models via convex analysis.
\newblock unpublished manuscript, 2013.

\bibitem[Choo and Siow(2006)]{CS:2006}
E.~Choo and A.~Siow.
\newblock Who marries whom and why?
\newblock \emph{Journal of Political Economy}, 114:\penalty0 175--201, 2006.

\bibitem[Court(1939)]{Court:39}
A.~Court.
\newblock Hedonic price indexes with automotive examples.
\newblock In \emph{The Dynamics of Automobile Demand}, pages 99--117. New York:
  General Motors Corporation, 1939.

\bibitem[Dupuy et~al.(2014)Dupuy, Galichon, and Henry]{DGH:2014}
A.~Dupuy, A.~Galichon, and M.~Henry.
\newblock Entropy methods for identifying hedonic models.
\newblock \emph{Mathematics and Financial Economics}, 8\penalty0 (4):\penalty0
  405--416, 2014.

\bibitem[Ekeland(2010)]{Ekeland:2010}
I.~Ekeland.
\newblock Existence, uniqueness and efficiency of equilibrium in hedonic
  markets with multidimensional types.
\newblock \emph{Economic Theory}, 42:\penalty0 275--315, 2010.

\bibitem[Ekeland et~al.(2002{\natexlab{a}})Ekeland, Heckman, and
  Nesheim]{EHN:2002}
I.~Ekeland, J.~Heckman, and L.~Nesheim.
\newblock Identifying hedonic models.
\newblock \emph{American Economic Review, Papers and Proceedings}, 92:\penalty0
  304--309, 2002{\natexlab{a}}.

\bibitem[Ekeland et~al.(2002{\natexlab{b}})Ekeland, Heckman, and
  Nesheim]{EHN:2002a}
I.~Ekeland, J.~Heckman, and L.~Nesheim.
\newblock Identification and estimation of hedonic models.
\newblock \emph{CEMMAP Working Paper CWP07/02}, 2002{\natexlab{b}}.

\bibitem[Ekeland et~al.(2004)Ekeland, Heckman, and Nesheim]{EHN:2004}
I.~Ekeland, J.~Heckman, and L.~Nesheim.
\newblock Identification and estimation of hedonic models.
\newblock \emph{Journal of Political Economy}, 112:\penalty0 60--109, 2004.

\bibitem[Ekeland et~al.(2012)Ekeland, Galichon, and Henry]{EGH:2012}
I.~Ekeland, A.~Galichon, and M.~Henry.
\newblock Comonotone measures of multivariate risks.
\newblock \emph{Mathematical Finance}, 22:\penalty0 109--132, 2012.

\bibitem[Epple(1987)]{Epple:1987}
D.~Epple.
\newblock Hedonic prices and implicit markets: Estimating demand and supply
  functions for differentiated products.
\newblock \emph{Journal of Political Economy}, 95:\penalty0 59--80, 1987.

\bibitem[Epple et~al.(2010)Epple, Peress, and Sieg]{EPS:2010}
D.~Epple, M.~Peress, and H.~Sieg.
\newblock Identification and semiparametric estimation of equilibrium models of
  local jurisdictions.
\newblock \emph{American Economic Journal: Microeconomics}, 2:\penalty0
  195--220, 2010.

\bibitem[Epple et~al.(2020)Epple, Quintero, and Sieg]{EQS:2020}
D.~Epple, L.~Quintero, and H.~Sieg.
\newblock A new approach to estimating equilibrium models for metropolitan
  housing markets.
\newblock \emph{Journal of Political Economy}, 128:\penalty0 948--983, 2020.

\bibitem[Figalli and Juillet(2008)]{FJ:2008}
A.~Figalli and N.~Juillet.
\newblock Absolute continuity of wasserstein geodesics in the heisenberg group.
\newblock \emph{Journal of Functional Analysis}, 255:\penalty0 133--141, 2008.

\bibitem[Gale and Nikaido(1965)]{GN:65}
D.~Gale and H.~Nikaido.
\newblock The jacobian matrix and global univalence of matrices.
\newblock \emph{Mathematische Annalen}, 159:\penalty0 81--93, 1965.

\bibitem[Galichon and Henry(2012)]{GH:2012}
A.~Galichon and M.~Henry.
\newblock Dual theory of choice under multivariate risks.
\newblock \emph{Journal of Economic Theory}, 147:\penalty0 1501--1516, 2012.

\bibitem[Galichon and Salani\'e(2012)]{GS:2012}
A.~Galichon and B.~Salani\'e.
\newblock Cupid's invisible hand; social surplus and identification in matching
  models.
\newblock unpublished manuscript, 2012.

\bibitem[Gangbo and McCann(1996)]{GM:96}
W.~Gangbo and R.~McCann.
\newblock The geometry of optimal transportation.
\newblock \emph{Acta Mathematica}, 177:\penalty0 113--161, 1996.

\bibitem[Heckman(1999)]{Heckman:99}
J.~Heckman.
\newblock Notes on hedonic models.
\newblock unpublished lecture notes on hedonic models. University of Chicago,
  Department of Economics, revised from 1991 and 1995 versions, 1999.

\bibitem[Heckman(2019)]{Heckman:2019}
J.~Heckman.
\newblock The race between demand and supply: Tinbergen's pioneering studies of
  earnings inequality.
\newblock \emph{De Economist}, 167:\penalty0 243--258, 2019.

\bibitem[Heckman et~al.(2003)Heckman, Matzkin, and Nesheim]{HMN:2003}
J.~Heckman, R.~Matzkin, and L.~Nesheim.
\newblock Simulation and estimation of nonadditive hedonic models.
\newblock NBER Working Paper No. 9895, 2003.

\bibitem[Heckman et~al.(2005)Heckman, Matzkin, and Nesheim]{HMN:2005}
J.~Heckman, R.~Matzkin, and L.~Nesheim.
\newblock Simulation and estimation of hedonic models.
\newblock In T.~Kehoe, T.~Srinivasan, and J.~Whalley, editors, \emph{Frontiers
  in Applied General Equilibrium}. Cambridge University Press, 2005.

\bibitem[Heckman et~al.(2010)Heckman, Matzkin, and Nesheim]{HMN:2010}
J.~Heckman, R.~Matzkin, and L.~Nesheim.
\newblock Nonparametric identification and estimation of nonadditive hedonic
  models.
\newblock \emph{Econometrica}, 78\penalty0 (5):\penalty0 1569--1591, 2010.

\bibitem[Kahn and Lang(1988)]{KL:1988}
S.~Kahn and K.~Lang.
\newblock Efficient estimation of structural hedonic systems.
\newblock \emph{International Economic Review}, 29:\penalty0 157--166, 1988.

\bibitem[Kim and Pass(2014)]{KP:2014}
Y.-H. Kim and B.~Pass.
\newblock Wasserstein barycenters over riemannian manfolds.
\newblock unpublished manuscript, 2014.

\bibitem[Kuminoff et~al.(2013)Kuminoff, Smith, and Timmins]{KST:2013}
N.~Kuminoff, K.~Smith, and C.~Timmins.
\newblock The new economics of equilibrium sorting and policy evaluation using
  housing markets.
\newblock \emph{Journal of Economic Literature}, 51:\penalty0 1007--1062, 2013.

\bibitem[Lewbel and Pendakur(2017)]{LP:2017}
A.~Lewbel and K.~Pendakur.
\newblock Unobserved preference heterogeneity in demand using generalized
  random coefficients.
\newblock \emph{Journal of Political Economy}, 125:\penalty0 1100--1148, 2017.

\bibitem[Ma et~al.(2005)Ma, Trudinger, and Wang]{MTW:2005}
X.~Ma, N.~Trudinger, and X.~Wang.
\newblock Regularity of potential functions of the optimal transportation
  problem.
\newblock \emph{Archive for Rational Mechanics and Analysis}, 177:\penalty0
  151--183, 2005.

\bibitem[Mas-Colell(1979)]{Mas-Colell:79}
A.~Mas-Colell.
\newblock Homeomorphisms of compact, convex sets and the jacobian matrix.
\newblock \emph{SIAM Journal of Mathematical Analysis}, 10:\penalty0
  1105--1109, 1979.

\bibitem[Matzkin(2003)]{Matzkin:2003}
R.~Matzkin.
\newblock Nonparametric estimation of nonadditive random functions.
\newblock \emph{Econometrica}, 71:\penalty0 1339--1375, 2003.

\bibitem[Matzkin(2008)]{Matzkin:2008}
R.~Matzkin.
\newblock Identification in nonparametric simultaneous equations.
\newblock \emph{Econometrica}, 76:\penalty0 945--978, 2008.

\bibitem[Matzkin(2013)]{Matzkin:2013}
R.~Matzkin.
\newblock Nonparametric identification in structural econometric models.
\newblock \emph{Annual Review of Economics}, 5:\penalty0 457--486, 2013.

\bibitem[Matzkin(2015)]{Matzkin:2015}
R.~Matzkin.
\newblock Estimation of nonparametric models with simultaneity.
\newblock \emph{Econometrica}, 83:\penalty0 1--66, 2015.

\bibitem[McCann(1995)]{McCann:95}
R.~McCann.
\newblock Existence and uniqueness of monotone measure-preserving maps.
\newblock \emph{Duke Mathematical Journal}, 80:\penalty0 309--324, 1995.

\bibitem[Nesheim(2015)]{Nesheim:2013}
L.~Nesheim.
\newblock Identification of multidimensional hedonic models.
\newblock CeMMAP Working Paper, 2015.

\bibitem[Peyr\'e and Cuturi(2019)]{PC:2019}
G.~Peyr\'e and M.~Cuturi.
\newblock \emph{Computational Optimal Transport with Applications to Data
  Science}.
\newblock NOW: Boston, 2019.

\bibitem[Pollard(2002)]{Pollard:2002}
D.~Pollard.
\newblock \emph{A User's Guide to Measure Theoretic Probability}.
\newblock Cambridge University Press, 2002.

\bibitem[Rockafellar(1970)]{Rockafellar:70}
T.~Rockafellar.
\newblock \emph{Convex Analysis}.
\newblock Princeton University Press, Princeton, 1970.

\bibitem[Rosen(1974)]{Rosen:1974}
S.~Rosen.
\newblock Hedonic prices and implicit markets: product differentiation in pure
  competition.
\newblock \emph{Journal of Political Economy}, 82:\penalty0 34--55, 1974.

\bibitem[Shi et~al.(2018)Shi, Shum, and Song]{SSS:2018}
X.~Shi, M.~Shum, and W.~Song.
\newblock Estimating semiparametric panel multinomial choice models using
  cyclical monotonicity.
\newblock \emph{Econometrica}, 86:\penalty0 737--761, 2018.

\bibitem[Tauchen and Witte(2001)]{TW:2001}
H.~Tauchen and A.~Witte.
\newblock Estimating hedonic models: implications of the theory.
\newblock NBER Working Paper No. 0271, 2001.

\bibitem[Tinbergen(1956)]{Tinbergen:1956}
J.~Tinbergen.
\newblock On the theory of income distribution.
\newblock \emph{Weltwirtschaftliches Archiv}, 77:\penalty0 155--173, 1956.

\bibitem[Villani(2003)]{Villani:2003}
C.~Villani.
\newblock \emph{Topics in Optimal Transportation}.
\newblock American Mathematical Society, Providence, 2003.

\bibitem[Villani(2009)]{Villani:2009}
C.~Villani.
\newblock \emph{Optimal Transport}.
\newblock Springer, Berlin, 2009.

\bibitem[Yinger(2014)]{Yinger:2014}
J.~Yinger.
\newblock Hedonic markets and sorting equilibria: Bid-function envelopes for
  public services and neighborhood amenities.
\newblock \emph{Journal of Urban Economics}, 86:\penalty0 9--25, 2014.

\end{thebibliography}

\end{document}